\begin{document}
\title{Joint Allocation Strategies of Power and Spreading Factors with Imperfect Orthogonality in LoRa Networks}
\author{\small \authorblockN{Licia Amichi$^{(1)}$, Megumi Kaneko$^{(2)}$, Ellen Hidemi Fukuda$^{(3)}$, Nancy El Rachkidy$^{(4)}$, and Alexandre Guitton$^{(4)}$}

\small 
\authorblockA{$^{(1)}$ INRIA Saclay, Bâtiment Alan Turing Campus de l'École Polytechnique, 91120 Palaiseau, France }

\small 
\authorblockA{$^{(2)}$ National Institute of Informatics, 2-1-2 Hitotsubashi, Chiyoda-ku, 101-8430 Tokyo, Japan}

\small 
\authorblockA{$^{(3)}$Kyoto University, Yoshida Honmachi, 606-8501 Kyoto, Japan}

\small 
\authorblockA{$^{(4)}$ Université Clermont Auvergne, CNRS, LIMOS, F-63000 Clermont-Ferrand, France} 

\small 
Email: licia.amichi@inria.fr, megkaneko@nii.ac.jp, ellen@i.kyoto-u.ac.jp, \{nancy.el\_rachkidy, alexandre.guitton\}@uca.fr} 
\maketitle
\vspace{-2cm}
\begin{abstract}
\vspace{-.2cm}
The LoRa physical layer is one of the most promising Low Power Wide-Area Network (LPWAN) technologies for future Internet of Things (IoT) applications. It provides a flexible adaptation of coverage and data rate by allocating different Spreading Factors (SFs) and transmit powers to end-devices. We focus on improving throughput fairness while reducing energy consumption. Whereas most existing methods assume perfect SF orthogonality and 
ignore the harmful effects of inter-SF interferences, we formulate a joint SF and power allocation problem to maximize the minimum uplink throughput of end-devices, subject to co-SF and inter-SF interferences, and power constraints. This results into a mixed-integer non-linear optimization, which, for tractability, is split into two sub-problems: firstly, the SF assignment for fixed transmit powers, and secondly, the power allocation given the previously obtained assignment solution. For the first sub-problem, we propose a low-complexity many-to-one matching algorithm between SFs and end-devices. For the second one, given its intractability, we transform it using two types of constraints’ approximation: a linearized and a quadratic version. Our performance evaluation demonstrates that the proposed joint SF allocation and power optimization enables to drastically enhance various performance objectives such as throughput, fairness and power consumption, and that it outperforms baseline schemes.\footnote{Part of this paper will be presented in IEEE International Conference on Communications (ICC) 2019~\cite{lic19may}.}

\end{abstract}
\vspace{-.5cm}
\begin{IEEEkeywords}
\vspace{-.2cm}
LoRa, Spreading Factor, Resource Allocation Optimization, Matching Theory
\end{IEEEkeywords}

\IEEEpeerreviewmaketitle

\section{Introduction}
\label{sec:intro}
A wide range of applications will be enabled by the advent of Internet of Things (IoT) technology, among which smart cities, intelligent transportation systems and environmental monitoring. Given the expected proliferation of such IoT devices in the near future, providing tailored wireless communication protocols with high spectral efficiency and low power consumption is becoming more and more urgent. Indeed, many of these services will depend on the future IoT Wireless Sensor Networks (WSNs), supported by the newly developed Low-Power Wide-Area Network (LPWAN) technologies such as LoRa, SigFox or Ingenu~\cite{sigF,ingn,loraA,jet17may}. The LoRa physical layer uses the Chirp Spread Spectrum (CSS) modulation technique, where each chirp encodes $2^m$ values, for Spreading Factor (SF) $m=7$ to $12$~\cite{lorP}, and which allows multiple end-devices to use the same channel simultaneously. Based on the LoRa physical layer, LoRaWAN defines the MAC layer protocol standardized by LoRa Alliance~\cite{loraAS}. It is an increasingly used LPWAN technology, as it operates in the ISM unlicensed bands and enables a flexible adaptation of transmission rates and coverages under low energy consumption~\cite{lorP}. The LoRaWAN architecture is a star topology, where end-devices communicate with the network server through gateways over several channels based on ALOHA mechanism, with duty cycle limitations~\cite{loraA}. In LoRaWAN, smaller SFs provide higher data rates but reduced ranges, while larger SFs allow longer ranges but lower rates~\cite{jet17may}.

The main issue of LoRa-based networks such as LoRaWAN is the throughput limitation: the physical bitrate varies between 300 and 50000 bps~\cite{loraAS}. In addition, collisions are very harmful to the system performance as the LoRa gateway is unable to correctly decode simultaneous signals sent by devices using the same SF on the same channel. Such interferences will be referred to as co-SF interferences. Although SFs were widely considered to be orthogonal among themselves, some recent studies have shown that this is not the case by experimentally evaluating the effects of inter-SF interferences~\cite{Dan17sep,Gui18mar,aam18sep}. Thus, authors in~\cite{ine18nov} have analyzed the effect of imperfect SF orthogonality, through the comparison of two scenarios, perfect and imperfect SF orthogonality. Authors in~\cite{War18} also analyzed the achievable uplink LoRa throughput under imperfect SF orthogonality, and have demonstrated the harmful impact of both co-SF and inter-SF interferences on the overall throughput. More recently, ~\cite{aam18sep} also unveiled a significant drop in performance when taking into account the inter-SF interferences in high-density deployments. In~\cite{luc18oct}, the authors proposed a model for analyzing the performance of a multi-cell LoRa system considering co-SF interference, inter-SF interference, and the aggregated intra and inter-cell interferences. They also highlighted the necessity for an SF allocation scheme accounting for these interferences.

In order to improve the LoRa system performance, a number of works have proposed resource optimization methods~\cite{Zhi17dec,bre17jul,nan18}. However, most papers, so far, have assumed perfect orthogonality among SFs. In particular, the authors in~\cite{Zhi17dec} designed a channel and power allocation algorithm that maximizes the minimal rate. However, no SF allocation nor SF-dependent rates were considered, despite the strong dependency of the rate to SFs. In addition, the solution of~\cite{Zhi17dec} requires instantaneous Channel State Information (CSI) feedback, which is not adapted to LoRa networks due to their energy consumption limitations~\cite{loraAS}. In~\cite{bre17jul}, a heuristic SF-allocation is proposed in addition to a transmit power control algorithm, where end-devices with similar path losses are simply assigned to the same channel with different SFs, according to their distance to the gateway. Although the issue of inter-SF interferences was highlighted, it was ignored in their proposed solution. The authors of~\cite{nan18,nan18sep} proposed a method for decoding superposed LoRa signals using the same SF, as well as a full MAC protocol enabling collision resolution, the combination of which was shown to drastically outperform LoRaWAN jointly in terms of network throughput, delay, and energy efficiency. Finally, reference~\cite{xia19jan} extended the channel allocation method of~\cite{Zhi17dec} by investigating power allocation, and proposed an algorithm based on Markov decision process modeling.

Therefore, in this work, we jointly investigate the issues of SF and transmit power allocation optimization under both co-SF and inter-SF interferences. Unlike our preliminary work~\cite{lic19may} which only considered SF allocation under fixed transmit power, and treated the cases of co-SF and inter-SF interferences separately, we now tackle the joint SF and power allocation under a generalized co-SF and inter-SF interference modeling. We focus on the problem of maximizing the minimum achievable short-term average rate in the uplink, whereby short-term average rate is defined as the average rate over random channel fading, but given a fixed position of end-devices. 
\begin{figure}[ht]
    \centering
    \includegraphics[width=10cm]{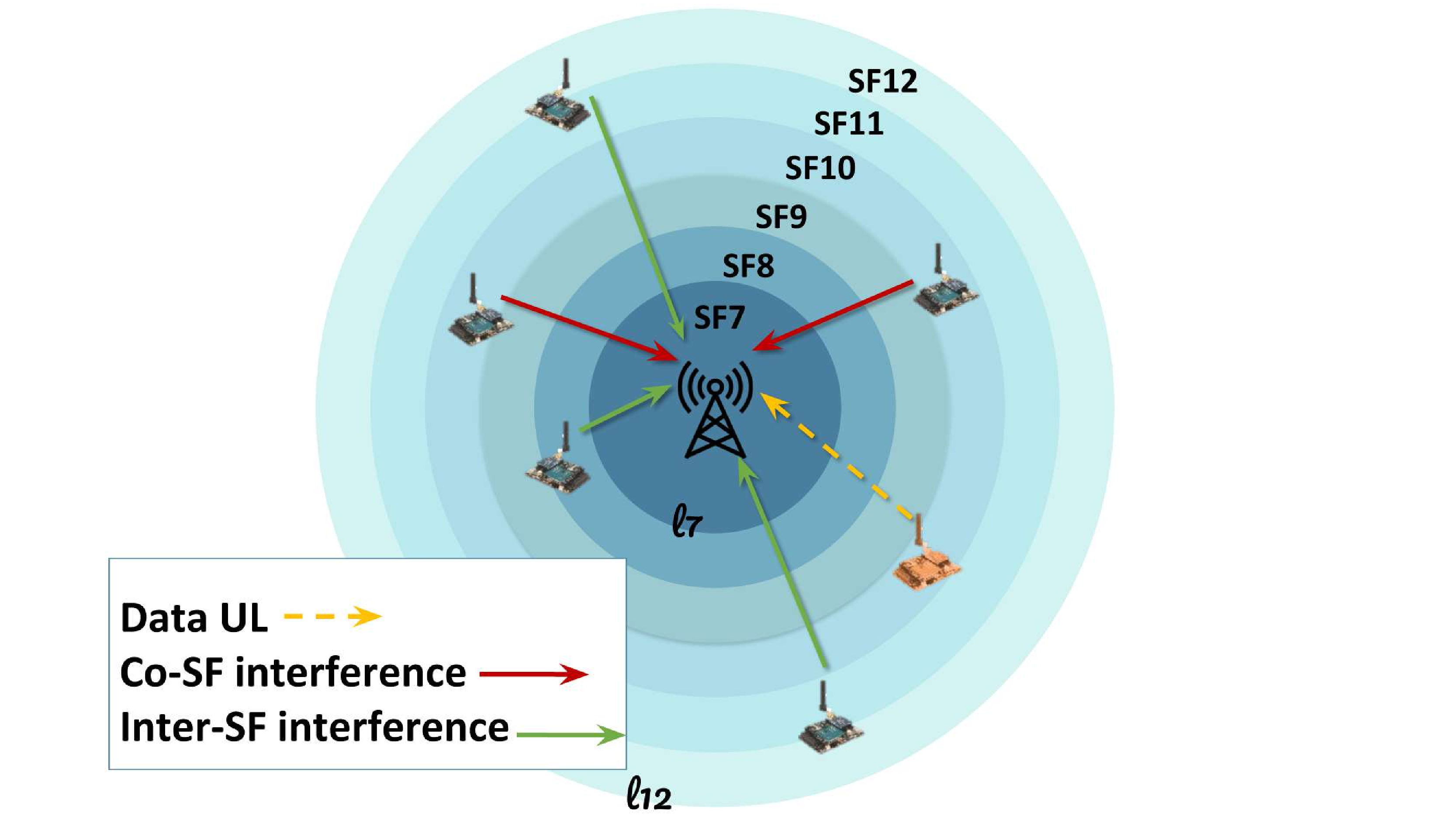}
        \vspace{-.5cm}
        \caption{LoRa network, with end-devices transmitting simultaneously on various SFs}
    \label{fig:system_setup}
    \vspace{-1cm}
\end{figure} 
This metric is especially suited for LoRa networks, since the end-devices will likely be fixed for a certain period of time (at least for a few seconds) in many applications, and their positions known at the gateway, as in conventional signal-strength-based SF allocation methods~\cite{loraAS}. Firstly, we formulate a joint SF assignment and power allocation problem by modeling the achievable uplink short-term average rate under co-SF and inter-SF interferences, and power constraints. Next, given the mathematical intractability of this mixed-integer optimization problem, we split it into two sub-problems: SF assignment under fixed transmit power, then transmit power allocation given the previous SF assignment solution. To solve the first sub-problem, we propose an SF-allocation algorithm based on matching theory. We show its stability and convergence properties, and analyze its computational complexity. Next, we transform the second sub-problem into an equivalent feasibility problem with non-linear constraints. To make it tractable, we propose to approximate the constraints in two different ways: linear and quadratic. The numerical results demonstrate that, compared to baseline schemes, our proposed method not only provides larger minimum rates, but also jointly improves the network throughput and fairness level. Moreover, the proposed power control further improves the system's performance in terms of minimum achievable rates and user fairness, while realizing massive power savings.

The remainder of this paper is organized as follows. Section~\ref{sec:system} describes the system model. Section~\ref{sec:prob} presents our joint SF and transmit power allocation problem and its contraints. Section~\ref{sec:sf-alloc} details a low-complexity many-to-one matching algorithm for the first sub-problem. Section~\ref{sec:tp-allocation} discusses our transmit power allocation scheme for the second sub-problem. Section~\ref{sec:num_res} studies the performance of the proposed algorithms. Finally, Section~\ref{sec:conclude} presents our conclusions.

\section{System Model}
\label{sec:system}
We consider a gateway located at the center of a circular cell or radius $R$ km and $N$ end-devices randomly distributed within it and simultaneously active, as depicted in Figure~\ref{fig:system_setup}. We denote by $\mathcal{N}$ the set of end-devices and by $\mathcal{M}=\{7,8,\dots,12\}$ the set of SFs. We assume that all end-devices transmit on the same channel $c$ of bandwidth $BW$, with a duty cycle of 100\% without loss of generality \footnote{LoRaWAN imposes a duty cycle of 1\% in some channels~\cite{loraAS}, in which case the theoretically achievable throughput would be 100-fold, see Section VI.}. The data bit-rate $R_m$ of $\mathrm{SF}_m, \, m\in\mathcal{M}$, is given by~\cite{lorP},
\begin{equation}\setlength{\jot}{0.1pt} R_m=\frac{m\times CR}{\frac{2^m}{BW}},\label{fig:leRm}\end{equation}
where $CR=\frac{4}{4+x}$ is the coding rate, with $x\in\{1,2,3,4\}$.

Let $h_n$ be the channel gain between the end-device $n$ and the gateway, $f_c$  the carrier frequency and $\mathrm{A}(f_c)=(f_c^2\times10^{-2.8})^{-1}$ the deterministic path-loss~\cite{War18}. Then, the uplink instantaneous Channel-to-Noise Ratio (CNR), $\zeta_{nm}$, for end-device $n$ at $\mathrm{SF}_m$ is given by~\cite{War18},
\begin{equation}
\setlength{\jot}{0.1pt}
    \zeta_{nm}=\frac{{|h_n|}^2\mathrm{A}(f_c)}{r_n^\alpha \sigma_c^2},
    \label{eq:SNR}
\end{equation}
where $r_n$ is the distance from end-device $n$ to the gateway, $\alpha$ is the path loss exponent and $\sigma_c^2=-174+\mathrm{NF}+10\mathrm{log}_{10}(BW)$ dBm is the Additive White Gaussian Noise (AWGN) and $\mathrm{NF}$ is the receiver noise figure. Assuming Rayleigh fading channels, the CNR $\zeta_{nm}$ is modeled as an exponential random variable with mean $\overline{\zeta}_{nm}=\frac{\mathrm{A}(f_c)}{r_n^\alpha \sigma_c^2}$.

The area covered by each SF is given by the distance ranges in Table~\ref{tab:tab1}~\cite{War18},
\begin{equation}
\setlength{\jot}{0.1pt}
    l_m=\mathrm{e}^{\frac{1}{\alpha}\times \mathrm{ln}\big(\frac{\mathrm{A}(f_c)}{LB_m}\big)},
\label{eq:coverage}
\end{equation}
where $LB_m$ is the link budget of the $\mathrm{SF}_m$ defined as $LB_m=P_\mathrm{{max}}-\mathrm{\theta}_{rx_m}$, given the receiver sensitivity $\mathrm{\theta}_{rx_m}$ of each $\mathrm{SF}_m$ in Table~\ref{tab:tab1} and $P_\mathrm{{max}}$ the maximal transmit power. Hence, larger SFs result in larger communication ranges, with $l_{12}=R$.

 \newcolumntype{C}[1]{>{\centering\arraybackslash}p{#1}}
\begin{table}[t]
\centering 
\resizebox{\textwidth}{!}{
\begin{tabular}{|C{1cm}|C{2cm}|C{2.5cm}|C{2.25cm}|C{2.25cm}|C{2.5cm}|}
\hline
    \textbf{SF} $m$& \textbf{Bit-rate} [kb/s] & \textbf{Receiver sensitivity~\cite{lorP}} [dBm]& \textbf{Reception thresh. }$\mathrm{\theta_{\mathrm{rx}}}_m$ [dB] & \textbf{InterSF thresh.~\cite{mar16nov} }$\mathrm{\tilde{\theta}}_{m}$ [dB] & \textbf{Distance ranges} \\
\hline
    7 &5.47&-123&-6&-7.5& [0,$l_7$] \\
\hline
    8&3.13&-126&-9&-9& ($l_7$,$l_8$] \\
\hline
    9&1.76&-129&-12&-13.5& ($l_8$,$l_9$] \\
\hline
    10&0.98&-132&-15&-15& ($l_{9}$,$l_{10}$] \\
\hline
    11&0.54&-134.5&-17.5&-18& ($l_{10}$,$l_{11}$] \\
\hline
    12&0.29&-137&-20&-22.5 &($l_{11}$,$l_{12}$] \\
\hline
\end{tabular}
}
\vspace{.1cm}
\caption{LoRa Characteristics at $BW$=125kHz \cite{War18}} 
\label{tab:tab1}
\vspace{-1.5cm}
\end{table}

Next, we denote the SF assignment by $s_{ij}$ and define it as,
\[s_{ij}=\left\{
                \begin{array}{ll}
                  1 , \text{if end-device } \,\, i \,\,\text{ uses SF } \,\, j\\
                  0, \text{otherwise.}
                \end{array}
              \right.
\]

If there is only one end-device $n$ assigned to $\mathrm{SF}_m$, this end-device is only subject to inter-SF interferences caused by end-devices using a different SF. Hence the inter-SF Signal-to-Interference-plus-Noise-Ratio (SINR) of end-device $n$ can be expressed as
\begin{equation}
\setlength{\jot}{0.1pt}
\mathrm{SINR}_{nm}^\mathrm{{inter}}=\frac{\mathrm{\zeta}_{nm}p_{nm}}{\sum\limits_{i\in\mathcal{N}_{-n}}\sum\limits_{j\in\mathcal{M}_{-m}}s_{ij}p_{ij}\zeta_{ij}+1},
\label{eq:SINR_inter}
\end{equation}
where $p_{nm}$ is the transmission power of the end-device $n$ at $\mathrm{SF}_m$, $\mathcal{N}_{-n}=\mathcal{N}\backslash\{n\}$ and $\mathcal{M}_{-m}=\mathcal{M}\backslash\{m\}$.

When there is more than one end-device assigned to a SF, these devices are subject to both inter-SF and co-SF interferences. Therefore, the co-SF SINR of device $n$ on $\mathrm{SF}_m$ is written as,

\begin{equation}
\setlength{\jot}{0.1pt}
\mathrm{SINR}^\mathrm{co}_{nm}=\frac{\mathrm{\zeta}_{nm}p_{nm}}{\sum\limits_{i\in\mathcal{N}_{-n}}\sum\limits_{j\in\mathcal{M}_{-m}}s_{ij}p_{ij}\zeta_{ij}+\sum\limits_{i\in\mathcal{N}_{-n}}s_{im}p_{im}\mathrm{\zeta}_{im}+1}.
\label{eq:SINR_co}
\end{equation}

Note that this is a more general model as compared to that of~\cite{lic19may}, which assumed the dominance of co-SF interferences over inter-SF interferences. In conformity to LoRaWAN standards, instantaneous CSI feedback is not assumed, unlike~\cite{Zhi17dec}. Hence, the SF allocation is performed every period of time, during which the long-term fading instance, i.e., path loss, can be assumed to be fixed. This is well suited to a wide range of applications envisioned for IoT systems based on LoRa, expected to be static, or with low mobility~\cite{aly16sep}. Therefore, the achievable uplink short-term average rate for end-device $n$ at $\mathrm{SF}_m$ is given similarly to~\cite{War18} by,
\begin{equation}
\tau_{nm} = R_m\times P_{\mathrm{cap}}^{(n,m)},
\label{eq:throughput}
\end{equation}
where $P_{\mathrm{cap}}^{(n,m)}$ is the probability of successful reception analyzed in the following section.


\section{Problem Formulation}
\label{sec:prob}
In this section, we formulate the joint SF and power allocation optimization problem in our considered LoRa-based system, under imperfect SF orthogonality. In particular, the goal will be to improve the overall fairness of the system by maximizing the minimal uplink average rate over end-devices and SFs, under co-SF and inter-SF interferences. We first derive the expression of the probability of successful reception, $P_{\mathrm{cap}}^{(n,m)}$. Assuming $N>1$, there are two cases:
\subsubsection{One end-device $n$ at $\mathrm{SF}_m$} end-device $n$ is only subject to inter-SF interferences. The transmission can be successfully decoded if the node satisfies the \textit{inter-SF} as well as the signal \textit{reception} conditions. In this case, inter-SF interferences are more critical than the signal reception condition since there are always inter-SF interferences for $N>1$. Hence the probability of successful transmission can be written as,
\begin{equation}
\setlength{\jot}{0.1pt}
\begin{split} 
P_\mathrm{{cap}_{iSF}}^{(n,m)}=P\left(\mathrm{SINR}_{nm}^\mathrm{{inter}} \geq \mathrm{\tilde{\theta}}_{m}\right),
\end{split}
\normalsize 
\label{cap_inter-SF_1}
\end{equation}
where $\mathrm{SINR}_{nm}^\mathrm{{inter}}$ is given in~(\ref{eq:SINR_inter}) and $\mathrm{\tilde{\theta}}_{m}$ is the inter-SF interference capture threshold for $\mathrm{SF}_m$, defined in Table~\ref{tab:tab1}. Using the random instantaneous CNR variables $\zeta_{nm}$ for all $(n,m)$ and marginalizing over them, it has been shown in~\cite{lic19may} with similar calculations as in~\cite{War18} that~(\ref{cap_inter-SF_1}) can be written as,
\begin{equation}
\setlength{\jot}{0.1pt}
\begin{split} 
    P_\mathrm{{cap}_{iSF}}^{(n,m)} &=\mathrm{e}^{-\frac{\mathrm{\tilde{\theta}}_{m}\sigma_c^2r_n^\alpha}{\mathrm{A}(f_c)p_{nm}}}\prod\limits_{i \in \mathcal{N}_{-n} }\prod\limits_{j \in \mathcal{M}_{-m}}\frac{1}{\mathrm{\tilde{\theta}}_{m} s_{ij}\frac{p_{ij}}{p_{nm} }\times {(\frac{r_n}{r_i})}^\alpha+1}.
\end{split}
\normalsize 
\label{cap_inter-SF_5}
\end{equation}  
\subsubsection{More than one end-device at $\mathrm{SF}_m$} in this case, the co-SF interferences as well as the inter-SF interferences largely dominate the signal reception condition~\cite{War18}. Therefore, the success probability is expressed as in~\cite{geo17apr},
\begin{equation}
\setlength{\jot}{0.1pt}
\begin{split}
    P_\mathrm{{cap}_{coSF}}^{(n,m)}&=P\left(\mathrm{SINR}^\mathrm{co}_{nm}\geq \theta_\mathrm{co}\right),
\end{split}
\label{cap_co-sf_1}
\end{equation}
\normalsize 
where $\mathrm{SINR}^\mathrm{co}_{nm}$ is given in~(\ref{eq:SINR_co}) and $\theta_\mathrm{co}$ is the co-SF capture threshold which is equal to 6dB for all $\mathrm{SF}_m$~\cite{lorP,geo17apr}.
With similar calculations as in~\cite{lic19may}, we obtain

\begin{equation}
\setlength{\jot}{0.1pt}
\begin{split}    
    P_\mathrm{{cap}_{coSF}}^{(n,m)}&= \mathrm{e}^{-\frac{\mathrm{\theta_{co}}\sigma_c^2r_n^\alpha}{ A\left(f_c\right)p_{nm}}}\left(\prod\limits_{i \in \mathcal{N}_{-n}}\prod\limits_{j \in \mathcal{M}_{-m}} \frac{1}{\mathrm{\theta_{co}}s_{ij}\frac{p_{ij}}{ p_{nm}}\times {\left(\frac{r_n}{r_i}\right)}^\alpha+1}\right)\prod\limits_{i \in \mathcal{N}_{-n}} \frac{1}{\mathrm{\theta_{co}}s_{im}\frac{p_{im}}{ p_{nm}}\times {\left(\frac{r_n}{r_i}\right)}^\alpha+1}.
\end{split}
\normalsize 
\label{cap_co-sf_3}
\end{equation} 

Given the above analysis, the joint SF and transmit power allocation optimization underlaying LoRaWAN network is formulated as follows (for $N>1$),
\begin{equation} 
\setlength{\jot}{0.1pt}
\max_{{s_{nm}}p_{nm}}\min_{\substack{(n,m)\in \\ \mathcal{N}\times \mathcal{M}\\ \text{s.t. } s_{nm}\neq 0}}f(s_{nm}, p_{nm})=s_{nm}R_m P_\mathrm{{cap}}^{(n,m)},
\label{obj_fun_0}
\normalsize 
\end{equation}
where the minimization is over the $s_{nm}$ that are non-zero, and
\begin{equation}
\begin{split}
    &P_\mathrm{{cap}}^{(n,m)}=I\left({\sum\limits_{k \in \mathcal{N}}s_{km}\geq2}\right)P_\mathrm{{cap}_{coSF}}^{(n,m)}+I\left(\sum\limits_{k \in \mathcal{N}} s_{km}=1\right)  P_\mathrm{{cap}_{iSF}}^{(n,m)},
\end{split}
\normalsize 
\end{equation}
where $I(C)$ is the indicator function, i.e., it equals 1 if the condition $C$ is verified and 0 otherwise.

Finally, the overall optimization problem becomes

\begin{subequations}\label{obj_fun_prob}
\setlength{\jot}{0.1pt}
\begin{alignat}{2} 
\nonumber&(P)\max_{{s_{nm}}p_{nm}}\min_{\substack{(n,m)\in \\  \mathcal{N}\times \mathcal{M}\\\text{s.t. } s_{nm}\neq 0}}s_{nm}R_m \Bigg[ I\left({\sum\limits_{k \in \mathcal{N}}s_{km}=1}\right) \mathrm{e}^{-\frac{\mathrm{\tilde{\theta}}_{m}\sigma_c^2r_n^\alpha}{\mathrm{A}\left(f_c\right)p_{nm}}}\prod\limits_{i \in \mathcal{N}_{-n} }\prod\limits_{j \in \mathcal{M}_{-m}}\frac{1}{\mathrm{\tilde{\theta}}_{m} s_{ij}\frac{p_{ij}}{p_{nm} } {\left(\frac{r_n}{r_i}\right)}^\alpha+1}\\&+ I\left({\sum\limits_{k \in \mathcal{N}}s_{km}\geq2}\right)\mathrm{e}^{-\frac{\mathrm{\theta_{co}}\sigma_c^2r_n^\alpha}{ A\left(f_c\right)p_{nm}}}\left(\prod\limits_{i \in \mathcal{N}_{-n}}\prod\limits_{j \in \mathcal{M}_{-m}} \frac{1}{\mathrm{\theta_{co}}s_{ij}\frac{p_{ij}}{ p_{nm}} {\left(\frac{r_n}{r_i}\right)}^\alpha+1}\right)\prod\limits_{i \in \mathcal{N}_{-n}} \frac{1}{\mathrm{\theta_{co}}s_{im}\frac{p_{im}}{ p_{nm}} {\left(\frac{r_n}{r_i}\right)}^\alpha+1} \Bigg]\tag{\ref{obj_fun_prob}}\\
&\text{s.t.       } \quad \text{  C1: } 0 \leq  p_{nm} \leq P_\mathrm{max}, p_{nm}\in\mathbb{R}^{+} \label{opt:cond_0_prob}
\\
&\quad \quad \quad \text{C2: } s_{nm} \in \{0,1\}, \text{ } \forall (n,m) \in \mathcal{N} \times \mathcal{M} \label{opt:cond_1_prob}
\\
 &\quad \quad \quad \text{C3: } \sum\limits_{m \in \mathcal{M}}s_{nm} \leq 1, \text{ }\forall n \in \mathcal{N} \label{opt:cond_2_prob}
 \\ &\quad \quad \quad  \text{C4: } \sum\limits_{n \in \mathcal{N}}s_{nm} \leq \mathrm{N_{max}}(m), \text{ }\forall m\in \mathcal{M} \label{opt:cond_3_prob}\\
  &\quad \quad \quad  \text{C5: } \text{ if } N>M, 1 \leq \sum\limits_{n\in\mathcal{N}} s_{nm}, \text{ } \forall m \in \mathcal{M} \label{opt:cond_4_prob}
\end{alignat}
\normalsize 
\end{subequations}

Our objective function~\eqref{obj_fun_prob} expresses the maximization of the minimum data-rate over all served end-devices (i.e., for which $s_{nm} \neq 0$) and SFs. Constraint~(\ref{opt:cond_0_prob}) is the power budget, where the maximum transmit power per end-device is fixed to $P_{\mathrm{max}}$. Constraint~(\ref{opt:cond_1_prob}) defines the binary SF allocation variables $s_{nm}$. Constraints~(\ref{opt:cond_2_prob}) and~(\ref{opt:cond_3_prob})\footnote{Setting $\mathrm{N_{max}}(m)$ enables to control the harmful effects of co-SF interferences, and reduces the computational complexity of the proposed method, as shown in Sections~\ref{algo_analysis} and~\ref{choice_nmax}.} ensure that an end-device $n$ is assigned to at most one $\mathrm{SF}$, and that the maximal number of end-devices sharing $\mathrm{SF}_m$ is $\mathrm{N_{max}}(m)$. Finally, ~(\ref{opt:cond_4_prob}) ensures that if there are enough end-devices ($N>M$), no SFs should remain unused, i.e., at least one end-device should be allocated to each SF. Clearly, $(P)$ is a mixed-integer problem with a non-convex objective function, as it includes both binary allocation variables $s_{nm}$ and continuous power allocation variables $p_{nm}$. Such problems are known to be generally NP-hard~\cite{rav78sep}, making them difficult to solve. We therefore propose to solve this problem by decomposing it into the following two optimization phases: 
(1) the discrete optimization phase of the allocation of binary variables $s_{nm}$ while keeping the power allocation variables $p_{nm}$ fixed to $P_\mathrm{max}$, 
(2) the continuous optimization phase of the power allocation variables $p_{nm}$, where the allocation variables have been fixed to their previous solution. 
These two phases may be iterated until convergence, or until the maximum number of iterations $N_I$ has been reached.

Denoting by $\mathbf{s}= [s_{nm}]$  and $\mathbf{p}= [p_{nm}], \forall n \in \mathcal{N}, m \in \mathcal{M}$, the SF assignment vector and transmit
power vector for all end-devices, respectively, 
Algorithm~\ref{algo:general} provides the overview of the general proposal.

\begin{algorithm}
\small
\caption{Proposed joint SF and transmit power allocation}
\label{algo:general}
\textbf{Initialization: } SF assignment vector: $\mathbf{s}^{(0)} \gets \mathbf{0}$, transmit power vector: $\mathbf{p}^{(0)}\gets P_\mathrm{max}$.
\algdef{SE}[DOWHILE]{Do}{doWhile}{\algorithmicdo}[1]{\algorithmicwhile\ #1}
\begin{algorithmic}[1]
\State $i\gets 1$.
\Do
\State SF assignment: find $\mathbf{s}^{(i)}$, for fixed $\mathbf{p}^{(i)}\gets P_{\mathrm{max}}$. \Comment{(Sec.~\ref{sec:sf-alloc})}
\State Transmit power allocation: find $\mathbf{p}^{(i)}$, for fixed $\mathbf{s}^{(i)}$.
\Comment{(Sec.~\ref{sec:tp-allocation})}
\doWhile{$f\left(\mathbf{s}^{(i)}, \mathbf{p}^{(i)}\right)-f\left(\mathbf{s}^{(i-1)}, \mathbf{p}^{(i-1)}\right) \geq \epsilon$ or $i \leq N_I$.}
\end{algorithmic}
\end{algorithm}
\vspace{-.3cm}

In the next sections, we describe each of the optimization phases.

\section{Proposed Spreading Factor Allocation}
\label{sec:sf-alloc}
\subsection{Formulation of the proposed SF allocation optimization}
In this section, the problem of SF allocation is addressed. We assume that all end-devices transmit with the maximum transmission power, i.e., $p_{nm}=P_\mathrm{{max}}, \forall n, m$. This problem can be formulated as follows,
\begin{subequations}\label{obj_fun_sf}
\setlength{\jot}{0.1pt}
\begin{alignat}{2} 
\nonumber&(P1)\max_{{s_{nm}}}\min_{\substack{(n,m)\in \\  \mathcal{N}\times \mathcal{M}\\\text{s.t. } s_{nm}\neq 0}}f(s_{nm})=s_{nm}R_m
\Bigg[I\left({\sum\limits_{k \in \mathcal{N}}s_{km}\geq2}\right)\mathrm{e}^{-\frac{\mathrm{\theta_{co}}\sigma_c^2r_n^\alpha}{ \mathrm{A}(f_c)P_{\mathrm{max}}}}  \nonumber\times\prod\limits_{i \in \mathcal{N}_{-n}} \frac{1}{\mathrm{\theta_{co}}s_{im} {(\frac{r_n}{r_i})}^\alpha+1} \\&+I\left(\sum\limits_{k \in \mathcal{N}} s_{km}=1\right) \mathrm{e}^{-\frac{\mathrm{\tilde{\theta}}_{m}\sigma_c^2r_n^\alpha}{\mathrm{A}\left(f_c\right)P_{\mathrm{max}}}}\times \prod\limits_{\substack{(i,j)\in \\ \mathcal{N}_{-n}\times \mathcal{M}_{-m}}}\frac{1}{\mathrm{\tilde{\theta}}_{m} s_{ij} {\left(\frac{r_n}{r_i}\right)}^\alpha+1}\Bigg] \tag{\ref{obj_fun_sf}}\\
&\text{s.t.   } \quad \text{   C1: } s_{nm} \in \{0,1\},\text{ } \forall (n,m) \in \mathcal{N} \times \mathcal{M} \label{opt:cond_1_sf}
\\
 &\quad\quad\quad\text{C2: } \sum\limits_{m \in \mathcal{M}}s_{nm} \leq 1, \text{ }\forall n \in \mathcal{N} \label{opt:cond_2_sf}
 \\ &\quad \quad\quad \text{C3: } \sum\limits_{n \in \mathcal{N}}s_{nm} \leq \mathrm{N_{max}}(m), \text{ }\forall m\in \mathcal{M} \label{opt:cond_3_sf}\\
  &\quad \quad\quad \text{C4: } \text{ if } N>M, 1 \leq \sum\limits_{n\in\mathcal{M}} s_{nm}, \text{ } \forall m \label{opt:cond_4_sf}
\end{alignat}
\normalsize 
\end{subequations}
$(P1)$ is an integer programming problem, given the binary variables $s_{nm}$, with a non-linear objective function, hence it is difficult to obtain its optimal solution. Therefore, we
propose an optimized SF allocation method, using tools from matching theory.\\ 

Matching theory is a promising tool for resource allocation
in wireless networks~\cite{yun15may}. According to this theory, our considered allocation problem $(P1)$ can be classified as a many-to-one matching problem with conventional externalities and peer effects. There are two sets of players, the set of SFs and the set of end-devices, where each player of the one set seeks to be matched with players of the opposing set. An end-device prefers to be matched to the SF offering the highest utility, while each SF prefers to be matched with the group of end-devices with the highest utility. The difficulty of our problem is that there is an interdependency between nodes' preferences, i.e., whenever an end-device is matched to an SF, the preferences of the other end-devices may change due to co-SF and inter-SF interferences. In addition to these conventional externalities (preference interdependency) and unlike the problem in~\cite{Zhi17dec} where only orthogonal channels (not SFs) were considered, our problem exhibits peer effects that are caused by inter-SF interferences. That is, the preferences of an end-device depend not only on the identity of the SF and the number of end-devices assigned to it, but also on the assignment of end-devices to other SFs (since they cause inter-SF interferences). Therefore, to solve $(P1)$, we propose a many-to-one matching algorithm between the set $\mathcal{M}$ of SFs and the set $\mathcal{N}$ of end-devices. Next, we define the basic concepts of matching theory.

\subsection{Fundamentals of Matching Theory}
In order to describe our proposed matching-based algorithm, we describe the basic concepts of matching theory that have been used in our algorithm:

\begin{itemize}
    \item \textbf{Matching pair:} a couple ($n$, $m$) assigned to each other.
    \item \textbf{Quotas of a player:} the maximum number of players with which it can be matched
    \begin{itemize}
        \item Each end-device has a quota of 1~(\ref{opt:cond_2_sf}),
        \item Each $\mathrm{SF}_m$ has a quota of $\mathrm{N_{max}}(m)$ end-devices~(\ref{opt:cond_3_sf}).
    \end{itemize}
    \item \textbf{Utility of an end-device:} defined for our problem as its short-term average rate. If it is the only end-device at $\mathrm{SF}_m$,
    \begin{equation}
    \setlength{\jot}{0.1pt}
    \begin{aligned}
    &U_n=R_m\mathrm{e}^{-\frac{\mathrm{\tilde{\theta}}_{m}\sigma_c^2r_n^\alpha}{\mathrm{A}\left(f_c\right)P_\mathrm{max}}}\prod\limits_{i \in \mathcal{N}_{-n} }\prod\limits_{j \in \mathcal{M}_{-m}}\frac{1}{\mathrm{\tilde{\theta}}_{m} s_{ij} {\left(\frac{r_n}{r_i}\right)}^\alpha+1}.
    \end{aligned}
    \label{eq:utilityDev_1}
    \end{equation}
    If it shares the $\mathrm{SF}_m$ with other end-devices,
    \begin{equation}
    \setlength{\jot}{0.1pt}
    \begin{aligned}
    &U_n=R_m \mathrm{e}^{-\frac{\mathrm{\theta_{co}}\sigma_c^2r_n^\alpha}{ A\left(f_c\right)P_\mathrm{max}}}\prod\limits_{i \in \mathcal{N}_{-n}}\prod\limits_{j \in \mathcal{M}_{-m}} \frac{1}{\mathrm{\theta_{co}}s_{ij}{\left(\frac{r_n}{r_i}\right)}^\alpha+1}\prod\limits_{i \in \mathcal{N}_{-n}} \frac{1}{\mathrm{\theta_{co}}s_{im}{\left(\frac{r_n}{r_i}\right)}^\alpha+1}.
    \end{aligned}
    \label{eq:utilityDev_2}
    \end{equation}
    \item \textbf{Utility of an SF:} defined for our problem as the minimum short-term average rate among the end-devices assigned to it. If $\mathrm{SF}_m$ is matched to one end-device only:
    \begin{equation}
    \setlength{\jot}{0.1pt}
    \begin{aligned}
    &U_m=R_m\mathrm{e}^{-\frac{\mathrm{\tilde{\theta}}_{m}\sigma_c^2r_n^\alpha}{\mathrm{A}\left(f_c\right)P_\mathrm{max}}}\prod\limits_{i \in \mathcal{N}_{-n} }\prod\limits_{j \in \mathcal{M}_{-m}}\frac{1}{\mathrm{\tilde{\theta}}_{m} s_{ij} {\left(\frac{r_n}{r_i}\right)}^\alpha+1},
    \end{aligned}
    \label{eq:utilitySF_1}
    \end{equation}
    otherwise $U_m$ is given as
    \begin{equation}
    \setlength{\jot}{0.1pt}
    \begin{aligned}
    &U_m=\min\limits_{n \in \mathcal{A}_m}R_m \mathrm{e}^{-\frac{\mathrm{\theta_{co}}\sigma_c^2r_n^\alpha}{ A\left(f_c\right)P_\mathrm{max}}}\prod\limits_{i \in \mathcal{N}_{-n}}\prod\limits_{j \in \mathcal{M}_{-m}} \frac{1}{\mathrm{\theta_{co}}s_{ij}{\left(\frac{r_n}{r_i}\right)}^\alpha+1}\prod\limits_{i \in \mathcal{N}_{-n}} \frac{1}{\mathrm{\theta_{co}}s_{im}{\left(\frac{r_n}{r_i}\right)}^\alpha+1}.
    \end{aligned}
    \label{eq:utilitySF_2}
    \end{equation}
    where $\mathcal{A}_m$ is the set of end-devices assigned to $\mathrm{SF}_m$.
    \item \textbf{Preference relation}: a player $q$ prefers a player $p_1$ over the player $p_2$, if the utility of $q$ is higher when it is matched to $p_1$ than when it is matched to $p_2$.
    \item \textbf{Blocking pair:} a matching pair $(n,m)$ is a blocking pair when $U_{n}$ or $U_{m}$ is higher when $n$ uses $m$, than when they use their current matches, without lowering the utilities of any other end-device nor SF. In this case, $n$ will leave its current match to be matched to $m$.
    \item \textbf{Two-sided exchange stable matching:} a matching solution where there is no blocking pair.
\end{itemize}

\subsection{Proposed SF-Allocation algorithm}
In this subsection, we describe the steps of the proposed matching-based algorithm which exploits matching techniques as in~\cite{Zhi17dec,yun15may}, tailored to our specific problem. First, the gateway performs an initial matching between the set $\mathcal{M}$ of SFs and the set $\mathcal{N}$ of end-devices by the Initial Matching in~Algorithm~\ref{algo:Initial_matching}. Next, it swaps the matching pairs obtained in the previous step until reaching a two-sided exchange stable matching by the Matching Refinement in~Algorithm~\ref{algo:Matching_refinement}. Details of these steps are given below.

Let $\mathcal{L}_{U}$ denote the set of end-devices that are not allocated to any SF, $\mathrm{req}_{m}$ the requests received by $\mathrm{SF}_m$, and $\mathcal{A}_m$ the set of end-devices assigned to $\mathrm{SF}_m$.
We suppose that the gateway knows its distance with all end-devices.\\

\textbf{Initialization:} the gateway starts by initializing the preference lists of end-devices and SFs. Each end-device $n$ with a distance $r_n$ to the gateway, can only use SFs if they are included in the coverage area ($r_n\leq l_m$) of the gateway for these SFs, therefore,
\begin{equation}
\setlength{\jot}{0.1pt}
    \mathcal{L}_{p,n}=\{m\in \mathcal{M}, \text{ s.t. } r_n \leq l_m\},
\end{equation}
$\mathcal{L}_{p,n}$ is sorted according to the increasing order of the distance threshold of the SFs ($l_m,\, m\in \mathcal{M}$), i.e., an SF with higher achievable rate is preferred. On the other hand, $\mathrm{SF}_m$ only considers end-devices having a distance to the gateway lower than $l_m$,
\begin{equation}
\setlength{\jot}{0.1pt}
    \mathcal{L}_{p,m}=\{n \in \mathcal{N}, \text{ s.t. } r_n \leq l_m\}.  
\end{equation}
$\mathcal{L}_{p,m}$ is ordered such that a user $n_1 \in \mathcal{L}_{p,m}$ is ranked before another user $n_2 \in \mathcal{L}_{p,m}$ if $n_1$ is located in the ring of $\mathrm{SF}_m$ $\left(n_{1}\in(l_{m-1},l_{m}]\right)$ but not $n_2$ $\left(n_{2}\notin(l_{m-1},l_{m}]\right)$, or both are in the ring of $\mathrm{SF}_m$ but $n_1$ is closer to the gateway than $n_2$ $\left( |r_{n_1}| < |r_{n_2}|\right)$.\\ 
Unmatched end-devices are added to $\mathcal{L}_{U}$.\\

\textbf{Initial Matching:} for each end-device $n$ in the unmatched list $\mathcal{L}_{U}$, if $\mathcal{L}_{p,n}\neq\emptyset$, $n$ requests its first preferred SF and removes it from $\mathcal{L}_{p,n}$, otherwise the end-device is removed from $\mathcal{L}_{U}$ since all SFs it can use have already reached their quota. Then, each $\mathrm{SF}_m$ either accepts all current requests if its quota allows it, or it accepts the requests of its most preferred end-devices that fulfill its quota, if not. This process is repeated until $\mathcal{L}_{U}$ becomes empty.\\

\textbf{Matching Refinement:} for each matching pair ($n,m$), the algorithm calculates $U_m$ using~(\ref{eq:utilitySF_1}) if it is only assigned to end-device $n$ and~(\ref{eq:utilitySF_2}) in the other case. The utility of end-device $n$ is calculated by~(\ref{eq:utilityDev_1}) if it is the only one at $\mathrm{SF}_m$, and with~(\ref{eq:utilityDev_2}) otherwise. Firstly, if there is an $\mathrm{SF}_l$ that is not assigned to any end-device that allows to increase $U_n$, the end-device leaves $\mathrm{SF}_m$ to be matched with $\mathrm{SF}_l$. Then, the algorithm calculates the utilities of every pair ($k,l$), and makes a swap between ($n,m$) and ($k,l$) and determines their new utilities. Secondly, if ($k,m$) or ($n,l$) is a blocking pair, the algorithm makes a swap. This swapping step is repeated until reaching a two-sided exchange stable matching.

\begin{algorithm}
\small{
\caption{Initial Matching}
\label{algo:Initial_matching}
\textbf{Initialization: } Set of unmatched end-devices: $\mathcal{L}_U\gets\mathcal{N}$, $\mathcal{A}_m\gets\emptyset$ 
\begin{algorithmic}[1]
\While{$\mathcal{L}_U \neq \emptyset$} 
\For{$i \in \mathcal{L}_U$}
\If{$ \mathcal{L}_{p,i}=\emptyset$}
\State $\mathcal{L}_U\gets \mathcal{L}_U \backslash \{i\} $;
\Else 
\State $a\gets\mathrm{firstPrefered(\mathcal{L}_{p,i})}$; \Comment{Favorite SF}
\State $\mathcal{L}_{p,i}\gets \mathcal{L}_{p,i} \backslash \{a\}$;
\State $\mathrm{req}_a\gets \mathrm{req}_a \cup \{i\}$;
\EndIf
\EndFor
\For{$j \in \mathcal{M}$}
\If{$\mathit{size}(\mathrm{req}_j)> 0  \And \mathit{size}(\mathcal{A}_j)< N_{Max}(j)$} \If{$(\mathit{size}(\mathrm{req}_j) +\mathit{size}(\mathcal{A}_j)) \leq  N_{Max}(j)$}
\State Accept all the requests and add the end-devices to $\mathcal{A}_j$;
\Else
\State Accept the requests of the $\left(\mathrm{N_{max}}-\mathit{size}(\mathcal{A}_j)\right)$ most preferred end-devices;
\State Add them to $\mathcal{A}_j$;
\EndIf
\EndIf
\EndFor
\EndWhile
\end{algorithmic}
}
\end{algorithm}
\begin{algorithm}
\small{
\caption{Matching Refinement}
\label{algo:Matching_refinement}
\begin{algorithmic}[1]
\State change$\gets$true; 
\While{$\mathrm{change} = \mathrm{true}$} \State change$\gets$false; 
\For{$ j \in \mathcal{M}$}
\State Calculate $\mathrm{U}_j$; \Comment{eq.~(\ref{eq:utilitySF_1}) or eq.~(\ref{eq:utilitySF_2})}
\For{$ i \in \mathcal{A}_j$}
\State Calculate $\mathrm{U}_i$; \Comment{eq.~(\ref{eq:utilityDev_1}) or eq.~(\ref{eq:utilityDev_2})}

\For{$l \in \mathcal{M}_{-j}$}
\If{$\mathit{size}(\mathcal{A}_l)=0$} \State Swap\big($(i,j)$,$(\emptyset,l)$\big);
\State Calculate the new utility $\mathrm{U}^\prime_i$ of $i$; \Comment{eq.~(\ref{eq:utilityDev_1}) or eq.~(\ref{eq:utilityDev_2})} \If{$\mathrm{U}^\prime_i \geq \mathrm{U}_i$}
\State Validate the Swap;
\State $\mathrm{change}\gets\mathrm{true}$;
\EndIf
\Else
\State Calculate $\mathrm{U}_l$;  \Comment{eq.~(\ref{eq:utilitySF_1}) or eq.~(\ref{eq:utilitySF_2})}
\For{$ k \in \mathcal{A}_l$}
\State Calculate $\mathrm{U}_k$;  \Comment{eq.~(\ref{eq:utilityDev_1}) or eq.~(\ref{eq:utilityDev_2})} \State Swap\big($(i,j)$,$(k,l)$\big);
\If{$(i,l)$ or $(k,j)$ is a blocking pair} 
\State Validate the Swap;
\State $\mathrm{change} \gets \mathrm{true}$; 
\EndIf
\EndFor
\EndIf
\EndFor
\EndFor
\EndFor
\EndWhile
\end{algorithmic}
}
\end{algorithm}

\subsection{Proposed SF-Allocation Algorithm Analysis}\label{algo_analysis}
We now prove the stability and convergence of the proposed SF-Allocation algorithm, and analyze its computational complexity.
\newtheorem{theo}{Proposition}
\begin{theo}{\textbf{Stability}:}
When the proposed algorithm terminates, it finds a two-sided exchange stable matching.
\end{theo}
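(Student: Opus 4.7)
The plan is to argue directly from the termination condition of the Matching Refinement loop in Algorithm~\ref{algo:Matching_refinement}. By definition, the outer \texttt{while} loop iterates as long as the boolean flag \texttt{change} is \texttt{true}, and \texttt{change} is reset to \texttt{false} at the start of each sweep. Consequently, the algorithm terminates only when an entire pass through all triples $(j,i,l)$ with $j\in\mathcal{M}$, $i\in\mathcal{A}_j$, $l\in\mathcal{M}_{-j}$ completes without validating any swap.

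Recall that a two-sided exchange stable matching is, by definition, a matching containing no blocking pair. Thus it suffices to show that, upon termination, no pair $(i,l)$ with $i\in\mathcal{N}$ and $l\in\mathcal{M}$ is blocking. I would pick an arbitrary candidate pair $(i,l)$, let $j$ denote the SF currently matched with $i$ (so $i\in\mathcal{A}_j$), and distinguish the two cases that Algorithm~\ref{algo:Matching_refinement} treats separately:

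\textbf{Case 1: $\mathcal{A}_l=\emptyset$.} The algorithm has just attempted the swap $\bigl((i,j),(\emptyset,l)\bigr)$ and validated it only if the new utility $U'_i$ satisfies $U'_i\geq U_i$. Since termination means this swap was not validated, we have $U'_i<U_i$, i.e., reassigning $i$ to $l$ would strictly decrease the utility of $i$. Hence $(i,l)$ cannot be blocking.

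\textbf{Case 2: $\mathcal{A}_l\neq\emptyset$.} For every $k\in\mathcal{A}_l$, the algorithm attempted the swap $\bigl((i,j),(k,l)\bigr)$ and validated it whenever $(i,l)$ or $(k,j)$ constituted a blocking pair. Since at termination no such swap was validated for any $k$, in particular $(i,l)$ is not a blocking pair.

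Combining both cases, no pair $(i,l)$ can be blocking at the moment of termination, so the output of the algorithm is two-sided exchange stable. The main obstacle is essentially bookkeeping: verifying that the enumeration over $j\in\mathcal{M}$, $i\in\mathcal{A}_j$, $l\in\mathcal{M}_{-j}$, and, in Case 2, $k\in\mathcal{A}_l$, exhausts all possible blocking pairs permissible under the quota and coverage constraints~(\ref{opt:cond_2_sf})--(\ref{opt:cond_4_sf}); this follows directly from inspection of the nested loops in Algorithm~\ref{algo:Matching_refinement}, since any candidate reassignment of an already-matched end-device is considered at least once per sweep.
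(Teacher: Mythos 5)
Your argument is correct and is essentially the paper's proof stated in contrapositive form: the paper assumes the final matching is not stable and derives that the algorithm would have continued, while you argue directly that a terminating sweep validates no swap and hence leaves no blocking pair. The extra case analysis you give (empty versus occupied target SF) just makes explicit the same observation that the loop's termination condition coincides with two-sided exchange stability.
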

\begin{proof}
Let us assume that the proposed SF-allocation algorithm terminates and the final matching is not two-sided exchange stable. Then, the matching contains at least one more blocking pair $(k,m)$ or $(n,l)$ where the utility of at least one player among $\{n,m,k,l\}$, can be improved without lowering the others' utility. Accordingly, the proposed algorithm would continue, thereby the matching would not be final, which contradicts the initial assumption.
\end{proof}
\begin{theo}{\textbf{Convergence}:}
After a finite number of swap operations, the algorithm eventually converges to a two-sided exchange stable matching.
\end{theo}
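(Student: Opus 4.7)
The plan is to combine a scalar potential function with the finiteness of the feasible matching space. Every swap validated by Algorithm~\ref{algo:Matching_refinement} is either a move of an end-device to an idle SF that strictly raises its own utility, or a swap triggered by a blocking pair. By the definition of a blocking pair given in Section~\ref{sec:sf-alloc}, any such move strictly increases at least one utility in $\{U_n\}_{n\in\mathcal{N}}\cup\{U_m\}_{m\in\mathcal{M}}$ while leaving every other utility (of any end-device or SF) unchanged or improved. Hence each validated swap induces a strict Pareto improvement of the full utility vector of the system.

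The first concrete step is to introduce the aggregated potential
\begin{equation*}
\Phi(\mathbf{s}) = \sum_{n\in\mathcal{N}} U_n(\mathbf{s}) + \sum_{m\in\mathcal{M}} U_m(\mathbf{s}),
\end{equation*}
and to observe that $\Phi(\mathbf{s}^{(t+1)}) > \Phi(\mathbf{s}^{(t)})$ whenever $\mathbf{s}^{(t+1)}$ is obtained from $\mathbf{s}^{(t)}$ by a validated swap, as an immediate consequence of the strict Pareto improvement noted above. The second step is to note that the set of assignments satisfying constraints~(\ref{opt:cond_1_sf})--(\ref{opt:cond_4_sf}) is finite: since each end-device is assigned to at most one of the $|\mathcal{M}|=6$ SFs, the number of feasible matchings is bounded by $7^N$. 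Consequently, $\Phi$ takes only finitely many values along the trajectory, and a strictly increasing real-valued sequence confined to a finite set must terminate after a finite number of steps. When the algorithm halts, no blocking pair remains, so by Proposition~1 the resulting matching is two-sided exchange stable.

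The step I expect to be the main obstacle is verifying rigorously that the blocking-pair check implemented in Algorithm~\ref{algo:Matching_refinement} indeed precludes the decrease of \emph{any} utility in the network, including those of end-devices assigned to SFs other than $m$ and $l$ whose inter-SF interference terms are perturbed by the swap. This requires a careful inspection of the expressions~(\ref{eq:utilityDev_1})--(\ref{eq:utilitySF_2}) to track how each term changes when $i$ and $k$ are exchanged between $\mathrm{SF}_j$ and $\mathrm{SF}_l$. Once this global Pareto-improvement property is established, the potential-function and finite-state-space arguments above yield convergence in a finite number of swap operations.
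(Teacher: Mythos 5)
Your proof is correct and rests on the same core observation as the paper's: every validated swap weakly improves all utilities and strictly improves at least one, so the utility profile is monotone along the run and the process must terminate, after which stability follows from Proposition~1. Where you genuinely differ is in how termination is concluded. The paper bounds the number of swap operations directly, asserting that a device assigned to $\mathrm{SF}_l$ can take part in at most $\mathrm{N_{max}}(l)\times\sum_{j\in\mathcal{M}_{-l}}\mathrm{N_{max}}(j)$ swaps and summing over $l$; this yields an explicit polynomial bound (reused in the complexity proposition) but is stated without justifying why the same pair of devices cannot be swapped again after the configuration has changed around them. Your route --- a strictly increasing potential $\Phi$ over the finite set of feasible matchings (at most $7^{N}$ of them) --- closes that gap cleanly, since a strictly increasing function of the state forbids revisiting any matching; the price is that your termination bound is exponential and therefore cannot substitute for the paper's swap count in the complexity analysis. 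Two remarks: first, the ``main obstacle'' you flag is in fact dispatched by the paper's own definition of a blocking pair, which \emph{requires} that no other end-device's or SF's utility be lowered, so the check in Algorithm~\ref{algo:Matching_refinement} precludes third-party degradation by construction; second, a caveat shared by both arguments is that the idle-SF move is validated under the non-strict test $U'_i \geq U_i$, so the strictness of the increase of $\Phi$ (and of the paper's ``utilities can only rise'') is not literally guaranteed by the pseudocode in the case of ties --- both proofs implicitly assume strict improvement there.
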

\begin{proof}
A swap operation occurs if it improves the utility of at least one player without decreasing the others', hence the utilities can only rise. Additionally, the maximal throughput that can be achieved on an $\mathrm{SF}_m$ is upper-bounded by the data bit-rate $R_m$, meaning that each $\mathrm{SF}_m$ and the end-devices assigned to it have utilities upper bounded by $R_m$.\\
The number of potential swap operations is finite: end-device assigned to $\mathrm{SF}_{l}$ can make at most $\mathrm{N_{max}}(l) \times \sum\limits_{j\in \mathcal{M}_{-l}}\mathrm{N_{max}}(j)\normalsize$ swap operations. The total number of swap operations is thus upper-bounded by $ \sum\limits_{l\in\mathcal{M}}\mathrm{N_{max}}(l)\times \sum\limits_{j \in \mathcal{M}_{-l}}\mathrm{N_{max}}(j)\normalsize$.
\end{proof}
\begin{theo}{\textbf{Complexity}:}
The running time of our proposed algorithm is upper-bounded by $\mathcal{O}\left(NM + Q^2M^2\right)$, where $Q = \max\limits_{m\in \mathcal{M}}\{\mathrm{N_{max}}(m)\}$.
\end{theo}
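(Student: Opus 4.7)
The plan is to decompose the analysis into the two subroutines, Initial Matching (Algorithm~\ref{algo:Initial_matching}) and Matching Refinement (Algorithm~\ref{algo:Matching_refinement}), and argue that their costs are $\mathcal{O}(NM)$ and $\mathcal{O}(Q^2 M^2)$ respectively.

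For Initial Matching, the key observation is that each time an end-device $n$ enters the inner \textbf{for} loop, one SF is removed from its preference list $\mathcal{L}_{p,n}$; since $|\mathcal{L}_{p,n}| \leq M$, each end-device can issue at most $M$ requests before it is either matched or dropped from $\mathcal{L}_U$. Summed over the $N$ end-devices, this gives at most $NM$ request events in total. The per-SF processing can be charged to these requests: if the preference lists $\mathcal{L}_{p,m}$ are pre-sorted once during initialization, selecting the most preferred requesters at each SF is amortized $\mathcal{O}(1)$ per incoming request. Hence the Initial Matching phase runs in $\mathcal{O}(NM)$ time.

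For Matching Refinement, I first bound the number of candidate swap tuples inspected in a single pass of the outer \textbf{while} loop. An inner iteration fixes a pair $(i,j)$ with $i\in\mathcal{A}_j$ together with $(k,l)$ where $l\in\mathcal{M}_{-j}$ and either $\mathcal{A}_l = \emptyset$ (handled in $\mathcal{O}(1)$) or $k\in\mathcal{A}_l$. Using constraint~(\ref{opt:cond_3_sf}) to bound $|\mathcal{A}_j|\leq Q$, the number of such tuples is
\begin{equation*}
\sum_{j\in\mathcal{M}} |\mathcal{A}_j| \sum_{l\in\mathcal{M}_{-j}} \max\bigl(1,|\mathcal{A}_l|\bigr) \;\leq\; (QM)(QM) \;=\; Q^2 M^2.
\end{equation*}
Assuming utilities are maintained incrementally so that only the players touched by a swap are re-evaluated, each swap check costs $\mathcal{O}(1)$ amortized time, yielding $\mathcal{O}(Q^2 M^2)$ work per pass.

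The main obstacle will be controlling the number of passes of the outer \textbf{while} loop, since a naive bound that multiplies passes by per-pass work would be pessimistic. Here I invoke Proposition~2: every validated swap strictly improves at least one utility without lowering any other, and all utilities are uniformly upper-bounded by $R_7$, so the total number of validated swaps across the entire execution is itself bounded by $\sum_{l\in\mathcal{M}} N_{\max}(l) \sum_{j\in\mathcal{M}_{-l}} N_{\max}(j) = \mathcal{O}(Q^2 M^2)$. An amortization argument then assigns the $\mathcal{O}(Q^2 M^2)$ scanning cost of each pass to its validated swaps (the loop terminates on the first pass producing no swap), so the whole refinement phase remains $\mathcal{O}(Q^2 M^2)$. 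Combining with the Initial Matching bound produces the claimed $\mathcal{O}(NM + Q^2 M^2)$.
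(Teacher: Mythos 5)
Your Initial Matching analysis is sound and reaches the paper's $\mathcal{O}(NM)$ bound by a slightly cleaner route: you charge requests to end-devices (each exhausts a preference list of length at most $M$), whereas the paper counts requests round by round; both give at most $NM$ request events. Your per-pass count of $Q^2M^2$ candidate swap tuples in Matching Refinement is likewise exactly the paper's calculation, which bounds the swaps examined in one iteration by $\sum_{m\in\mathcal{M}}\mathrm{N_{max}}(m)\sum_{l\in\mathcal{M}_{-m}}\mathrm{N_{max}}(l)\leq Q^{2}M(M-1)$.

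The gap is in your handling of the outer \textbf{while} loop. You are right that the number of passes must be controlled --- a point the paper's proof silently skips, since it only counts one iteration and then declares the refinement cost to be $\mathcal{O}(Q^2M^2)$ --- but your amortization does not close it. A pass that validates only a single swap still scans all $\Theta(Q^2M^2)$ candidate tuples, so ``assigning the scanning cost of each pass to its validated swaps'' can charge $\Theta(Q^2M^2)$ to one swap. Combining the per-pass cost with the Proposition~2 bound of $\mathcal{O}(Q^2M^2)$ on the total number of validated swaps (hence on the number of non-terminal passes) yields $(\text{passes})\times(\text{per-pass cost})=\mathcal{O}(Q^4M^4)$, not $\mathcal{O}(Q^2M^2)$; redistributing the cost over swaps does not reduce the sum. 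To genuinely obtain the stated bound for the whole refinement phase you would need either a constant bound on the number of passes or an incremental scheme that avoids rescanning every tuple each pass; neither is established in your argument (nor, in fairness, in the paper, whose proof is really a per-iteration bound).
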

\begin{proof}
\textit{Initial matching complexity}:
in the worst case, all the end-devices have the same preference list, and they are located in the area covered by all the SFs. At round$_1$ the gateway receives $N$ requests, at round$_2$ it receives $ N-\mathrm{N_{max}}(m_{1})\normalsize$ requests, at round$_i$ it receives $\footnotesize N-\sum\limits_{k=1}^{M-1}\mathrm{N_{max}}(m_{i}) \normalsize$ requests. Therefore, the total number of requests equals
$ NM-\sum\limits_{i=1}^{M-1}(M-i)\times \mathrm{N_{max}}(m_{i})\normalsize$.
The complexity of the initial matching is upper bounded by:
$\mathcal{O}\big(NM)$.\\
\textit{Matching refinement complexity}:
in each iteration, for each $\mathrm{SF}_m$, the algorithm considers at most $\mathrm{N_{max}}(m)$ end-devices and examines $ \sum\limits_{l\in \mathcal{M}_{-m}}\mathrm{N_{max}}(l)\normalsize$ swap operations for each of these end-devices. Therefore, the number of swap operations that are examined in one iteration is upper bounded by $ \sum\limits_{m\in \mathcal{M}}\mathrm{N_{max}}(m)\times\sum\limits_{l\in \mathcal{M}_{-m}}\mathrm{N_{max}}(l)\normalsize$.
Let $Q = \max\limits_{m\in \mathcal{M}}\{\mathrm{N_{max}}(m)\}$, thus the computational complexity of the matching refinement is upper bounded by $\mathcal{O}\big(Q^2M(M-1))$.\\
In summary, the computational complexity of our algorithm is upper bounded by $\mathcal{O}\left(NM + Q^2M^2\right)$.
\end{proof}
Note that this complexity is not excessive as our algorithm is run at the gateway which is not computationally-limited.

\section{Proposed Power Allocation Optimization}
\label{sec:tp-allocation}
Once the end-devices are assigned to SFs, we next optimize the power allocation variables in order to maximize the minimal throughput achieved on each SF. Given the fixed assignment variables $s_{nm}, \forall n,m$ from the previous step, the power allocation problem can be written as follows,

\begin{subequations}\label{obj_fun}
\setlength{\jot}{0.1pt}
\begin{alignat}{2} 
\nonumber&\max_{{p_{nm}}}\min_{\substack{(n,m)\in \\  \mathcal{N}\times \mathcal{M}}}f(p_{nm})=R_m \Bigg[ I\left({\sum\limits_{k \in \mathcal{N}}s_{km}=1}\right) \mathrm{e}^{-\frac{\mathrm{\tilde{\theta}}_{m}\sigma_c^2r_n^\alpha}{\mathrm{A}\left(f_c\right)p_{nm}}}\prod\limits_{i \in \mathcal{N}_{-n} }\prod\limits_{j \in \mathcal{M}_{-m}}\frac{1}{\mathrm{\tilde{\theta}}_{m}\frac{p_{ij}}{p_{nm} } {\left(\frac{r_n}{r_i}\right)}^\alpha+1}\\&+ I\left({\sum\limits_{k \in \mathcal{N}}s_{km}\geq2}\right)\mathrm{e}^{-\frac{\mathrm{\theta_{co}}\sigma_c^2r_n^\alpha}{ A\left(f_c\right)p_{nm}}}\left(\prod\limits_{i \in \mathcal{N}_{-n}}\prod\limits_{j \in \mathcal{M}_{-m}} \frac{1}{\mathrm{\theta_{co}}\frac{p_{ij}}{ p_{nm}} {\left(\frac{r_n}{r_i}\right)}^\alpha+1}\right)\prod\limits_{i \in \mathcal{N}_{-n}} \frac{1}{\mathrm{\theta_{co}}\frac{p_{im}}{ p_{nm}} {\left(\frac{r_n}{r_i}\right)}^\alpha+1} \Bigg]\tag{\ref{obj_fun}}\\
&\text{s.t.}\quad \text{C1: } 0 \leq  p_{nm} \leq P_\mathrm{max}, p_{nm}\in\mathbb{R}^+ \label{opt:cond_0}
\end{alignat}
\normalsize 
\end{subequations}

It can be observed that the objective function $f(p_{nm})$ of problem~(\ref{obj_fun_sf}), unlike in previous works such as~\cite{Zhi17dec}, is non-linear non-convex, for which a global optimum is difficult to obtain. This greatly increases the difficulty of this optimization problem. Instead, we seek for a near-optimal solution by transforming the initial problem as follows. Let $\mathcal{P}_\eta$ be the set of transmit power vectors $\mathbf{p}$ such that the minimum throughput over end-devices and SFs is above a certain parameter $\eta \in \mathbb{R}$, namely 
\begin{equation}
\setlength{\jot}{0.1pt}
    \mathcal{P}_\eta=\left\{\mathbf{p}|\min_{m}f\left(p_{nm}\right) \geq \eta , \forall n \in \mathcal{N}\right\}.
\end{equation}

Since the minimal throughput value is above $\eta$, all throughput values should be above $\eta$ as well. Hence, defining
\begin{equation}
\setlength{\jot}{0.1pt}
    \mathcal{P}^*_\eta=\left\{\mathbf{p}|f\left(p_{nm}\right) \geq \eta , \forall n \in \mathcal{N}, \forall m \in \mathcal{M}\right\},
\end{equation}
we can write $\mathcal{P}^*_\eta = \mathcal{P}_\eta$. Introducing a new variable $\eta \in\mathbb{R}^+$, problem~(\ref{obj_fun}) is equivalent to the following optimization problem,
\begin{subequations}\label{obj_fun_p}
\setlength{\jot}{0.1pt}
\begin{alignat}{2} 
    \max_{p_{nm}, \eta} &\quad \eta
    \tag{\ref{obj_fun_p}}\\
    \text{s.t.}
&\quad \text{C1: } 0 \leq  p_{nm} \leq P_\mathrm{max}, p_{nm}\in\mathbb{R}^+\\
&\quad\text{C2: } \mathbf{p}\in \mathcal{P}^*_\eta
\label{opt:cond_0_p}
\end{alignat}
\normalsize 
\end{subequations}

Therefore, we take the following approach: for a given $\eta$, we solve the feasibility problem
\begin{subequations}\label{obj_fun_p_1}
\setlength{\jot}{0.1pt}
\begin{alignat}{2} 
    \text{Find} &\quad \mathbf{p}
    \tag{\ref{obj_fun_p_1}}\\
    &\text{s.t.}
    & \quad \mathbf{p} \in [0,P_\mathrm{max}]^{NM\times 1} \cap \mathcal{P}^*_\eta, \label{opt:cond_0_p_o}
\end{alignat}
\normalsize 
\end{subequations}
then $\eta$ is increased until no feasible $\mathbf{p}$ can be found. In practice, parameter $\eta$ can be updated using the bisection method~\cite{Zhi17dec} as detailed in Algorithm~\ref{algo:power_1}, as follows. Initially, $\eta$ is lower-bounded by $\eta_{min}=0$, upper-bounded by $\eta_{max}$ which is equal to the minimal bit-rate over allocated SFs and end-devices. First, setting $\eta$ as the midpoint of the interval $[\eta_{min}, \eta_{max}]$, problem~\eqref{obj_fun_p_1} is solved and if a feasible solution is found, it is denoted as $\mathbf{p}_{opt}$ and we update the lower bound $\eta_{min}$ as $\eta$. Otherwise, if no feasible power vector is found, $\eta_{max}$ is set as $\eta$. This procedure is iterated until the interval length $[\eta_{min}, \eta_{max}]$ is smaller than the desired accuracy $\epsilon$.

\begin{algorithm}
\small
\caption{Power allocation optimization}
\label{algo:power_1}
\textbf{Initialization: } $\eta_{min}\gets 0$, $\eta_{max}\gets \min\limits_{m \in \mathcal{M}} R_m$, $\epsilon>0$.
\begin{algorithmic}[1]
\While{$\eta_{max}-\eta_{min} \geq \epsilon $}
\State $\eta \gets \frac{(\eta_{max}+\eta_{min})}{2}$;
\State Solve~\eqref{obj_fun_p_1}: find a transmit power vector $\mathbf{p}$ satisfying the constraint in~(\ref{obj_fun_p_1});
\If{$\mathbf{p}$ $ \mathrm{exists}$}
\State $\mathbf{p}_{opt}\gets \mathbf{p}$;
\State Calculate the utilities of each $\mathrm{SF}_m$, $\mathrm{U}_m$ using $\mathbf{p}_{opt}$
\State $\eta_{min} \gets \eta$;
\Else
\State $\eta_{max} \gets \eta$;
\EndIf
\State \hspace{-.7cm} $\mathcal{P}^*_\eta \gets \mathbf{p}_{opt}$;
\EndWhile
\end{algorithmic}
\end{algorithm}

However, $\mathcal{P}^*_\eta$ contains non-linear inequalities, making it difficult to solve the feasibility problem~\eqref{obj_fun_p_1}. Hence, we devise two methods for making this problem tractable: linear approximation (A) and quadratic approximation (B) of these non-linear inequalities.  

\subsection{Feasibility problem with linear approximation }
In this subsection, in order to make problem~\eqref{obj_fun_p_1} tractable, we first approximate the non-linear inequalities in the set $\mathcal{P}^*_\eta$ by linear ones. We distinguish two cases, one where only a single end-device is assigned to $\mathrm{SF}_m$ and the second, where more than one end-devices are assigned to $\mathrm{SF}_m$.
\subsubsection{Case 1}
a single end-device $n$ is assigned to $\mathrm{SF}_m$, hence $n$ is only subject to inter-SF interferences. Therefore, given~\eqref{cap_inter-SF_5}, $\mathcal{P}^*_\eta$ is given by,

\begin{equation}
\setlength{\jot}{0.1pt}
\begin{aligned}
    \mathcal{P}^*_\eta=\Bigg\{\mathbf{p}\Big|R_m  \mathrm{e}^{-\frac{\mathrm{\tilde{\theta}}_{m}\sigma_c^2r_n^\alpha}{\mathrm{A}\left(f_c\right)p_{nm}}}\prod\limits_{i \in \mathcal{N}_{-n} }\prod\limits_{j \in \mathcal{M}_{-m}}\frac{1}{\mathrm{\tilde{\theta}}_{m}\frac{p_{ij}}{p_{nm} } {\left(\frac{r_n}{r_i}\right)}^\alpha+1}\geq \eta, \, \forall m \in \mathcal{M}\Bigg\}.
\end{aligned}
\label{app_lin_c1_0}
\end{equation}

Rearranging and taking the logarithm of both sides, the inequalities in~\eqref{app_lin_c1_0} are equivalent to

\begin{equation}
\setlength{\jot}{0.1pt}
    \frac{\mathrm{\tilde{\theta}}_{m}\sigma_c^2r_n^\alpha}{\mathrm{A}\left(f_c\right)p_{nm}} +\sum\limits_{i \in  \mathcal{N}_{-n}} \sum\limits_{j \in \mathcal{M}_{-m} }\mathrm{ln}\left(\mathrm{\tilde{\theta}}_{m}\frac{p_{ij}}{p_{nm} } {\left(\frac{r_n}{r_i}\right)}^\alpha+1\right)\leq -\mathrm{ln}\left(\frac{\eta}{R_m }\right), \forall m \in \mathcal{M} .
\label{app_lin_c1_2}
\end{equation}

The term $\mathrm{\tilde{\theta}}_m\frac{p_{ij}}{p_{nm} } {\left(\frac{r_n}{r_i}\right)}^\alpha$ is dominated by the inter-SF interference capture threshold $\mathrm{\tilde{\theta}}_m$, which takes very small values as can be observed from Table~\ref{tab:tab1}. Thus, the term $\mathrm{\tilde{\theta}}_{m}\frac{p_{ij}}{p_{nm} } {\left(\frac{r_n}{r_i}\right)}^\alpha$ will be generally close to zero, as confirmed by the numerical evaluations in Section~\ref{sec:num_res}. Therefore, we can approximate the logarithmic term using the Taylor-Maclaurin series,
\begin{equation}
\setlength{\jot}{0.1pt}
\mathrm{ln}\left(\mathrm{\tilde{\theta}}_{m}\frac{p_{ij}}{p_{nm} } {\left(\frac{r_n}{r_i}\right)}^\alpha+1\right) = \mathrm{\tilde{\theta}}_{m}\frac{p_{ij}}{p_{nm} } {\left(\frac{r_n}{r_i}\right)}^\alpha + \mathrm{o}\left(\mathrm{\tilde{\theta}}_{m}\frac{p_{ij}}{p_{nm} } {\left(\frac{r_n}{r_i}\right)}^\alpha\right),
\label{app_lin_c1_3}
\end{equation}
where $\mathrm{o}\left(\mathrm{\tilde{\theta}}_{m}\frac{p_{ij}}{p_{nm} }{\left(\frac{r_n}{r_i}\right)}^\alpha\right)$ denotes the remainder of the Taylor series.

By substituting $\mathrm{ln}\left(\mathrm{\tilde{\theta}}_{m}\frac{p_{ij}}{p_{nm} } {\left(\frac{r_n}{r_i}\right)}^\alpha+1\right)$ by its approximation~\eqref{app_lin_c1_3} in~\eqref{app_lin_c1_2} and rearranging, we get the following linear inequalities,
\begin{equation}
\setlength{\jot}{0.1pt}
     \mathrm{ln}\left(\frac{\eta}{R_m }\right)p_{nm} +  \sum\limits_{i \in  \mathcal{N}_{-n}} \sum\limits_{j \in\mathcal{M}_{-m} }
    \mathrm{\tilde{\theta}}_{m} {\left(\frac{r_n}{r_i}\right)}^\alpha p_{ij}\leq -\frac{\mathrm{\tilde{\theta}}_{m}\sigma_c^2r_n^\alpha}{\mathrm{A}\left(f_c\right)}, \forall m \in \mathcal{M}.
\label{app_lin_c1_5}
\end{equation}

\subsubsection{Case 2}
\label{case2_app}
if $\mathrm{SF}_m$ is shared by more than one end-device, from~\eqref{cap_co-sf_3}, the set $\mathcal{P}^*_\eta$ is given by

\begin{equation}
\setlength{\jot}{0.1pt}
\begin{aligned}
    \mathcal{P}^*_\eta=\Bigg\{\mathbf{p}\Big| R_m \mathrm{e}^{-\frac{\mathrm{\theta_{co}}\sigma_c^2r_n^\alpha}{ A\left(f_c\right)p_{nm}}}\left(\prod\limits_{i \in \mathcal{N}_{-n}}\prod\limits_{j \in \mathcal{M}_{-m}} \frac{1}{\mathrm{\theta_{co}}\frac{p_{ij}}{ p_{nm}} {\left(\frac{r_n}{r_i}\right)}^\alpha+1}\right)\prod\limits_{i \in \mathcal{N}_{-n}} \frac{1}{\mathrm{\theta_{co}}\frac{p_{im}}{ p_{nm}} {\left(\frac{r_n}{r_i}\right)}^\alpha+1} \geq \eta , \, \forall m \in \mathcal{M}\Bigg\}.
\end{aligned}
\label{app_lin_c2_0}
\end{equation}

Similarly to Case 1, we perform the following linearization in order to make problem~\eqref{obj_fun_p_1} tractable. By rearranging the inequalities, we obtain for all $m \in \mathcal{M}$,\\ 
\begin{equation}
\setlength{\jot}{0.1pt}
    {-\frac{\mathrm{\theta_{co}}\sigma_c^2r_n^\alpha}{ A\left(f_c\right)p_{nm}}}-\sum\limits_{i \in \mathcal{N}_{-n}}\sum\limits_{j \in \mathcal{M}_{-m}} \mathrm{ln}\left(\mathrm{\theta_{co}}\frac{p_{ij}}{ p_{nm}} {\left(\frac{r_n}{r_i}\right)}^\alpha+1\right)-\sum\limits_{i \in \mathcal{N}_{-n}} \mathrm{ln}\left(\mathrm{\theta_{co}}\frac{p_{im}}{ p_{nm}} {\left(\frac{r_n}{r_i}\right)}^\alpha+1\right) \geq \mathrm{ln}\left(\frac{\eta}{R_m}\right).
    \label{app_lin_c2_2}
\end{equation}

However, in this case, the co-SF interference capture threshold $\mathrm{\theta_{co}}$ no longer induces small values of $\mathrm{\theta_{co}}\frac{p_{im}}{ p_{nm}} {\left(\frac{r_n}{r_i}\right)}^\alpha$, since in practice, $\mathrm{\theta_{co}}=6$ dB~\cite{geo17apr}. Therefore, we now make use of a different approximation based on Taylor's theorem.

Let $g(x)=\mathrm{ln}\left(\frac{\mathrm{\theta_{co}}}{ p_{nm}} {\left(\frac{r_n}{r_i}\right)}^\alpha x+1\right)$. Clearly, $g$ is a twice continuously differentiable function.

From Taylor's theorem, we have
\begin{equation}
g(x)= g(a)+g^\prime(a)(x-a)+\mathrm{o}\left(x-a\right), \forall a \in \mathbb{R}^+.
\label{app_lin_c2_3}
\end{equation}

Taking $a=\left(\frac{r_i}{r_n}\right)^\alpha\frac{p_{nm}}{\mathrm{\theta_{co}}}$ and given
\begin{equation}
\setlength{\jot}{0.1pt}
g'(x)=\frac{\frac{\mathrm{\theta_{co}}}{ p_{nm}} {\left(\frac{r_n}{r_i}\right)}^\alpha}{\frac{\mathrm{\theta_{co}}}{ p_{nm}} {\left(\frac{r_n}{r_i}\right)}^\alpha x+1}=\frac{1}{x+a},
\label{app_lin_c2_4}
\end{equation}
~\eqref{app_lin_c2_3} may be written
\begin{equation}
\setlength{\jot}{0.1pt}
g(x)= \left(\mathrm{ln}\left(2 \right)-\frac{1}{2}\right)+\frac{\mathrm{\theta_{co}}}{2p_{nm}} {\left(\frac{r_n}{r_i}\right)}^\alpha x +\mathrm{o}\left(x-\left(\frac{r_i}{r_n}\right)^\alpha\frac{p_{nm}}{\theta_{co}}\right).
\label{app_lin_c2_5}
\end{equation}

Dropping the remainder $\mathrm{o}$ and substituting the logarithmic terms of~\eqref{app_lin_c2_2} by their linear expressions in~\eqref{app_lin_c2_5} and taking $x=p_{ij}$, we obtain the linearized expressions 
\begin{equation}
\begin{aligned}
    &\frac{\mathrm{\theta_{co}}\sigma_c^2r_n^\alpha}{ \mathrm{A}\left(f_c\right)p_{nm}}+\left(\sum\limits_{i \in\mathcal{N}_{-n}}\sum\limits_{j \in\mathcal{M}_{-m}}\mathrm{ln}(2)-\frac{1}{2}+\frac{\mathrm{\theta_{co}}}{ 2} {\left(\frac{r_n}{r_i}\right)}^\alpha \frac{p_{ij}}{p_{nm}}\right)\\&+\sum\limits_{i \in\mathcal{N}_{-n}} \mathrm{ln}(2)-\frac{1}{2}+\frac{\mathrm{\theta_{co}}}{ 2} {\left(\frac{r_n}{r_i}\right)}^\alpha \frac{p_{im}}{p_{nm}} \leq -\mathrm{ln}\left(\frac{\eta}{R_m}\right).
    \label{app_lin_c2_8}
\end{aligned}
\end{equation}

Finally, from~\eqref{app_lin_c1_5} and~\eqref{app_lin_c2_8}, problem~\eqref{obj_fun_p_1} can be expressed as,

\begin{subequations}\label{obj_fun_app_lin}
\vspace{-.5cm}
\setlength{\jot}{0.1pt}
\begin{alignat}{2}
 \text{ Find }& \mathbf{p} \tag{\ref{obj_fun_app_lin}}\\
\text{s.t. }& \text{C1: }0 \leq  p_{nm} \leq P_\mathrm{max} \label{obj_fun_app_lin:cond_0}\\
 & \text{C6: }\mathrm{ln}\left(\frac{\eta}{R_m }\right)p_{nm} + \frac{\mathrm{\tilde{\theta}}_{m}\sigma_c^2r_n^\alpha}{\mathrm{A}\left(f_c\right)}+\sum\limits_{i \in  \mathcal{N}_{-n}}\sum\limits_{j \in \mathcal{M}_{-m} }
    \mathrm{\tilde{\theta}}_{m} {\left(\frac{r_n}{r_i}\right)}^\alpha p_{ij}\leq 0, \text{ if} \sum\limits_{k \in \mathcal{N}}s_{km}=1 \label{obj_fun_app_lin:cond_1} \\
 & \text{C7: }\mathrm{ln}\left(\frac{\eta}{R_m}\right)p_{nm}+{\frac{\mathrm{\theta_{co}}\sigma_c^2r_n^\alpha}{ A\left(f_c\right)}}+\left(\sum\limits_{i \in \mathcal{N}_{-n}}\sum\limits_{j \in \mathcal{M}_{-m}} \left(\mathrm{ln}\left(2\right)-\frac{1}{2}\right)p_{nm}+\frac{\mathrm{\theta_{co}}}{2} {\left(\frac{r_n}{r_i}\right)}^\alpha\ p_{ij}\right)+\nonumber\\&\sum\limits_{i \in \mathcal{N}_{-n}} \left(\mathrm{ln}\left(2\right)-\frac{1}{2}\right)p_{nm}+\frac{\mathrm{\theta_{co}}}{2} {\left(\frac{r_n}{r_i}\right)}^\alpha p_{im}\leq 0, \text{ if} \sum\limits_{k \in \mathcal{N}}s_{km}\geq 2 \label{obj_fun_app_lin:cond_2}
\end{alignat}
\normalsize 
\end{subequations}

Although problem~\eqref{obj_fun_app_lin} is a feasibility problem, by taking an arbitrary objective function, it can be written as a linear programming problem defined by linear inequalities. Hence, it can be solved with usual linear programming solvers such as \textit{linprog} or \textit{fmincon} in Matlab.

\subsection{Feasibility problem with quadratic approximation}
In this subsection, we propose a second method for making problem~\eqref{obj_fun_p_1} tractable, by means of quadratic approximation of the non-linear inequalities of $\mathcal{P}^*_\eta$. As in the previous subsection, we distinguish two cases:
\subsubsection{Case 1} only one end-device assigned to $\mathrm{SF}_m$. $\mathcal{P}^*_\eta$ is equal to~\eqref{app_lin_c1_0}. The quadratic approximation of the logarithmic terms in~\eqref{app_lin_c1_2} using the Taylor-Maclaurin series, is given by
\begin{equation}
\mathrm{ln}\left(\mathrm{\tilde{\theta}}_{m}\frac{p_{ij}}{p_{nm} } {\left(\frac{r_n}{r_i}\right)}^\alpha+1\right)= \mathrm{\tilde{\theta}}_{m}\frac{p_{ij}}{p_{nm} } {\left(\frac{r_n}{r_i}\right)}^\alpha -\frac{\mathrm{\tilde{\theta}}_{m}^2}{2}\left(\frac{p_{ij}}{p_{nm} }\right)^2 {\left(\frac{r_n}{r_i}\right)}^{2\alpha}+\mathrm{o}\left(\mathrm{\tilde{\theta}}_{m}^2\left(\frac{p_{ij}}{p_{nm} }\right)^2 {\left(\frac{r_n}{r_i}\right)}^{2\alpha}\right).
\label{app_qua_c1_0}
\end{equation}

Substituting the logarithmic term in~\eqref{app_lin_c1_2} and rearranging, we obtain the following inequality,
\begin{equation}
\setlength{\jot}{0.1pt}
     \mathrm{ln}\left(\frac{\eta}{R_m }\right)p_{nm}^2+\frac{\mathrm{\tilde{\theta}}_{m}\sigma_c^2r_n^\alpha}{\mathrm{A}\left(f_c\right)}p_{nm}+\sum\limits_{i \in  \mathcal{N}_{-n}} \sum\limits_{j \in \mathcal{M}_{-m} }
    \mathrm{\tilde{\theta}}_{m} {\left(\frac{r_n}{r_i}\right)}^\alpha p_{ij}p_{nm}-\frac{\mathrm{\tilde{\theta}}_{m}^2}{2} {\left(\frac{r_n}{r_i}\right)}^{2\alpha}p_{ij}^2\leq 0.
\label{app_qua_c1_2}
\end{equation}

\subsubsection{Case 2} $\mathrm{SF}_m$ is shared by more than one end-device. $\mathcal{P}^*_\eta$ is given by~\eqref{app_lin_c2_0}.\\
As in Section~\ref{case2_app} let $g(x)=\mathrm{ln}\left(\mathrm{\theta_{co}}\frac{x}{ p_{nm}} {\left(\frac{r_n}{r_i}\right)}^\alpha+1\right)$. $g$ is a twice continuously differentiable function. From Taylor's theorem, we have,
\begin{equation}
\setlength{\jot}{0.1pt}
g(x)= g(a)+g^\prime(a)(x-a)+\frac{g^{\prime\prime }(a)}{2!}(x-a)^2+\mathrm{o}\left((x-a)^2\right), \forall a \in \mathbb{R}^+.
\label{app_qud_c2_3}
\end{equation}

The first derivative of $g$ is given in~\eqref{app_lin_c2_4}, and its second derivative by

\begin{equation}
g''(x)=-\frac{\frac{\mathrm{\theta_{co}}^2}{ p_{nm}^2} {\left(\frac{r_n}{r_i}\right)}^{2\alpha}}{\left(\mathrm{\theta_{co}}\frac{x}{ p_{nm}} {\left(\frac{r_n}{r_i}\right)}^\alpha+1\right)^2}=\frac{-1}{(x+a)^2}.
\label{app_qua_c2_1}
\end{equation}

From~\eqref{app_qud_c2_3}, the quadratic approximation of $g$ centered at $a=\left(\frac{r_i}{r_n}\right)^\alpha\frac{p_{nm}}{\mathrm{\theta_{co}}}$ and for $x=p_{ij}$, is given by,
\begin{equation}
\setlength{\jot}{0.1pt}
g(p_{ij})= \left(\mathrm{ln} \text{ } 2 -\frac{5}{8}\right)+\frac{3}{4}\mathrm{\theta_{co}} {\left(\frac{r_n}{r_i}\right)}^\alpha\frac{p_{ij}}{p_{nm}}-\frac{\mathrm{\theta_{co}}^2}{ 8} {\left(\frac{r_n}{r_i}\right)}^{2\alpha}\frac{p_{ij}^2}{p_{nm}^2}+\mathrm{o}\left(\left(p_{ij}-\left(\frac{r_i}{r_n}\right)^\alpha\frac{p_{nm}}{\mathrm{\theta_{co}}}\right)^2\right),
\label{app_qua_c2_7}
\end{equation}
and for $x=p_{im}$,
\begin{equation}
\setlength{\jot}{0.1pt}
g(p_{im})= \left(\mathrm{ln} \text{ } 2-\frac{5}{8}\right)+\frac{3}{4}\mathrm{\theta_{co}} {\left(\frac{r_n}{r_i}\right)}^\alpha\frac{p_{im}}{p_{nm}}-\frac{\mathrm{\theta_{co}}^2}{ 8} {\left(\frac{r_n}{r_i}\right)}^{2\alpha}\frac{p_{im}^2}{p_{nm}^2}+\mathrm{o}\left(\left(p_{im}-\left(\frac{r_i}{r_n}\right)^\alpha\frac{p_{nm}}{\mathrm{\theta_{co}}}\right)^2\right).
\label{app_qua_c2_15}
\end{equation}

Finally by dropping the remainder~\eqref{app_lin_c2_2} becomes,
\begin{equation}
\setlength{\jot}{0.1pt}
\begin{split}
     &{\frac{\mathrm{\theta_{co}}\sigma_c^2r_n^\alpha}{ A\left(f_c\right)p_{nm}}}+\left(\sum\limits_{i \in \mathcal{N}_{-n}}\sum\limits_{j \in \mathcal{M}_{-m}} \left(\mathrm{ln} \text{ } 2-\frac{5}{8}\right)+\frac{3}{4}\mathrm{\theta_{co}} {\left(\frac{r_n}{r_i}\right)}^\alpha\frac{p_{ij}}{p_{nm}}-\frac{\mathrm{\theta_{co}}^2}{ 8} {\left(\frac{r_n}{r_i}\right)}^{2\alpha}\frac{p_{ij}^2}{p_{nm}^2}\right)+\\&\sum\limits_{i \in \mathcal{N}_{-n}} \left(\mathrm{ln} \text{ } 2-\frac{5}{8}\right)+\frac{3}{4}\mathrm{\theta_{co}} {\left(\frac{r_n}{r_i}\right)}^\alpha\frac{p_{im}}{p_{nm}}-\frac{\mathrm{\theta_{co}}^2}{ 8} {\left(\frac{r_n}{r_i}\right)}^{2\alpha}\frac{p_{im}^2}{p_{nm}^2} \leq -\mathrm{ln}\left(\frac{\eta}{R_m}\right).
     \end{split}
     \label{app_qua_c2_16}
\end{equation}

By multiplying both sides by $p^2_{nm}$ we obtain,
\begin{equation}
\setlength{\jot}{0.1pt}
\begin{split}
     &\mathrm{ln}\left(\frac{\eta}{R_m}\right)p_{nm}^2+{\frac{\mathrm{\theta_{co}}\sigma_c^2r_n^\alpha}{ A\left(f_c\right)}}p_{nm}+\left(\sum\limits_{i \in \mathcal{N}_{-n}}\sum\limits_{j \in \mathcal{M}_{-m}} \Bigg(\mathrm{ln} \text{ } 2-\frac{5}{8}\right)p_{nm}^2+\frac{3}{4}\mathrm{\theta_{co}} {\left(\frac{r_n}{r_i}\right)}^\alpha p_{ij}p_{nm}\\&-\frac{\mathrm{\theta_{co}}^2}{ 8} {\left(\frac{r_n}{r_i}\right)}^{2\alpha}p_{ij}^2\Bigg)+\sum\limits_{i \in \mathcal{N}_{-n}} \left(\mathrm{ln} \text{ } 2-\frac{5}{8}\right)p_{nm}^2+\frac{3}{4}\mathrm{\theta_{co}} {\left(\frac{r_n}{r_i}\right)}^\alpha p_{im}p_{nm}-\frac{\mathrm{\theta_{co}}^2}{ 8} {\left(\frac{r_n}{r_i}\right)}^{2\alpha}p_{im}^2\leq 0
     \end{split}
     \label{app_qua_c2_17}
\end{equation}

From~\eqref{app_qua_c1_2} and~\eqref{app_qua_c2_17}, problem~\eqref{obj_fun_p_1} can be expressed as\\
\vspace{-.5cm}
\begin{subequations}\label{obj_fun_app_quad}
\setlength{\jot}{0.1pt}
\begin{alignat}{2} 
 \text{ Find } \mathbf{p} \tag{\ref{obj_fun_app_quad}}\\\nonumber\\
\text{s.t } &\text{C1: }0 \leq  p_{nm} \leq P_\mathrm{max} \label{obj_fun_app_quad:cond_0}\\
& \text{C6: }\mathrm{ln}\left(\frac{\eta}{R_m }\right)p_{nm}^2+\frac{\mathrm{\tilde{\theta}}_{m}\sigma_c^2r_n^\alpha}{\mathrm{A}\left(f_c\right)}p_{nm}+  \sum\limits_{i \in  \mathcal{N}_{-n}}\sum\limits_{j \in \mathcal{M}_{-m} }
    \mathrm{\tilde{\theta}}_{m} {\left(\frac{r_n}{r_i}\right)}^\alpha p_{ij}p_{nm}\nonumber\\&-\frac{1}{2}\mathrm{\tilde{\theta}}_{m}^2 {\left(\frac{r_n}{r_i}\right)}^{2\alpha}p_{ij}^2\leq 0, \text{ if} \sum\limits_{k \in \mathcal{N}}s_{km}=1 \label{obj_fun_app_quad:cond_1}\\
 & \text{C7: }\mathrm{ln}\left(\frac{\eta}{R_m}\right)p_{nm}^2+{\frac{\mathrm{\theta_{co}}\sigma_c^2r_n^\alpha}{ A\left(f_c\right)}}p_{nm}+\Bigg (\sum\limits_{i \in \mathcal{N}_{-n}}\sum\limits_{j \in \mathcal{M}_{-m}} \left(\mathrm{ln} \text{ } 2-\frac{5}{8}\right)p_{nm}^2 \nonumber\\&+\frac{3}{4}\mathrm{\theta_{co}} {\left(\frac{r_n}{r_i}\right)}^\alpha p_{ij}p_{nm}-\frac{\mathrm{\theta_{co}}^2}{ 8} {\left(\frac{r_n}{r_i}\right)}^{2\alpha}p_{ij}^2\Bigg )+\sum\limits_{i \in \mathcal{N}_{-n}} \left(\mathrm{ln} \text{ } 2-\frac{5}{8}\right)p_{nm}^2 \nonumber\\&+\frac{3}{4}\mathrm{\theta_{co}} {\left(\frac{r_n}{r_i}\right)}^\alpha p_{im}p_{nm}-\frac{\mathrm{\theta_{co}}^2}{ 8} {\left(\frac{r_n}{r_i}\right)}^{2\alpha}p_{im}^2\leq 0, \text{ if} \sum\limits_{k \in \mathcal{N}}s_{km}\geq 2 \label{obj_fun_app_quad:cond_2}
\end{alignat}
\normalsize 
\end{subequations}

Problem~\eqref{obj_fun_app_quad} is a feasibility problem with quadratic inequality constraints. Hence, solutions can be computed by means of  solvers such \textit{fmincon} in Matlab.

\section{Numerical Results}
\label{sec:num_res}
\subsection{Simulation Settings}
We basically use the simulation parameters of references~\cite{geo17apr,War18}. Namely, we consider a cell of radius $R=1$ km, with a varying number of devices $N$ from 2 to 40. Note that all devices transmit with a duty cycle of 100\%. Hence, with a duty cycle of 1\% as preconized in LoRaWAN~\cite{loraA}, the actual number of end-devices would theoretically be 100-fold\footnote{The evaluations are made for 100\% duty cycle as this is the most challenging case. Hence, much better performance can be expected in the case of 1\% duty cycle.}, i.e., up to 4000. All end-devices transmit in the channel of carrier frequency $f_c=868$ MHz with a bandwidth $BW=125$ kHz. We consider a lossy urban environment, with a path loss exponent equal to 4. The maximal transmit power is fixed to $P_\mathrm{max}=14$ dBm. The number of iterations $N_I$ was fixed to 1, as it gives the best compromise between performance and computational complexity. 

\subsection{Baseline schemes}
We consider two baseline schemes for performance comparison: the random SF allocation~\cite{War18}, and the distance-SF allocation algorithms~\cite{mar16nov}, with a maximal number of simultaneously transmitting devices equal to $A=\sum\limits_{m \in \mathcal{M}} \mathrm{N_{max}}(m)$ for fair comparison with the proposed scheme. In addition, the transmit power of all end-devices are set equal to $P_\mathrm{max}=14$ [dBm], since no power allocation schemes had been proposed to jointly tackle co-SF and inter-SF interferences so far.
\begin{itemize}
    \item \textbf{Random SF-allocation (\emph{Conv. Random})}: the gateway chooses randomly $A$ devices among $\mathcal{N}$ and assigns a random SF to each of these devices among the possible SFs. Details of this scheme are given in Algorithm~\ref{algo:random}.

\begin{algorithm}[h]
\small
\caption{Baseline scheme: Random SF-Allocation}
\label{algo:random}
\textbf{Initialization: } Set of unmatched end-devices: $\mathcal{L}_{U}\gets \mathcal{N}$, list of end-devices assigned to $\mathrm{SF}_m$: $\mathcal{A}_m \gets \emptyset$.
\begin{algorithmic}[1]
\State i$\gets$0;\Comment{Choose $A=\sum\limits_{m \in \mathcal{M}}\mathrm{N_{max}}(m)$ random end-devices and assign them to a random $\mathrm{SF}_m$}
\While{$i < A$}
\State $d \gets \mathit{random}(\mathcal{L}_{U})$; 
\State $\mathcal{L}_{U} \gets \mathcal{L}_{U} \backslash \{d\}$;
\State $j$ $\gets \mathit{random}(\mathcal{M})$;
\State $\mathcal{A}_j.\mathit{add}(d)$;
\State $i$ $\gets$ $i+1$;
\EndWhile
\end{algorithmic}
\end{algorithm}

\item \textbf{Distance SF-Allocation (\emph{Conv. Distance}}): the gateway chooses randomly $A$ devices among $\mathcal{N}$. Then, the SF for each of these devices is determined by Table~\ref{tab:tab1} based on their distance $r_n$: device $n$ uses $\mathrm{SF}_m$ if $r_n \in (l_{m-1},l_m]$. Details of this scheme are given in Algorithm~\ref{algo:distance}.

\begin{algorithm}[ht]
\small
\caption{Baseline scheme: Distance SF-Allocation}
\label{algo:distance}
\textbf{Initialization:} Set of unmatched end-devices: $\mathcal{L}_{U}\gets \mathcal{N}$, list of end-devices assigned to $\mathrm{SF}_m$: $\mathcal{A}_m\gets \emptyset$ .
\begin{algorithmic}[1]
\State $i\gets 0$;
\Comment{Choose $A=\sum\limits_{m \in \mathcal{M}}\mathrm{N_{max}}(m)$ random end-devices and assign them to the SF}
\While{$i < A$}
\State $d \gets \mathit{random}(\mathcal{L}_U)$;
\State $\mathcal{L}_U \gets \mathcal{L}_U \backslash \{d\}$;
\For{$j \in \mathcal{M}$}
\If{$l_j\geq d.dist \And l_{j-1}\leq d.dist$}
\State $\mathcal{A}_j.\mathit{add}(d)$;
\EndIf
\EndFor
\State $i \gets i+1$;
\EndWhile
\end{algorithmic}
\end{algorithm}
\vspace{-.5cm}
\end{itemize}
\subsection{Choice of $N_{max}$ given a target minimum throughput}\label{choice_nmax}
To determine the quota of each SF, we fix a target minimal throughput equal to 1 bit/s. We have run preliminary simulations over 100000 frames. Table~\ref{tab:tab3} represents the minimal short-term average rate achieved on each SF, for different values of $\mathrm{N_{max}}$. We can observe that to guarantee the target minimal throughput of 1 bit/s, we can have at most three devices assigned to $\mathrm{SF}_7$ but only one  device to the other SFs. In the sequel, we consider two scenarios: firstly, where there is no co-SF interferences, i.e., $\mathrm{N_{max}}(m)=1 \text{ }\forall m$, and secondly, where both co-SF and inter-SF interferences are present, with at most three end-devices assigned to $\mathrm{SF}_{7} \left(\mathrm{N_{max}}(7)=3\right)$ and one to the others $\left(\mathrm{N_{max}}(m)=1 \text{ }\forall m\neq 7\right)$. For fair comparison, the number of simultaneously transmitting end-devices $A$ is equal to $\sum\limits_{m\in\mathcal{M}} \mathrm{N_{max}}(m)$ in each allocation period.


\begin{table}[ht]
\resizebox{\textwidth}{!}{
\begin{tabular}{|C{1.6cm}|C{1.86cm}|C{1.86cm}|C{1.86cm}|C{1.86cm}|C{1.86cm}|C{1.86cm}|}
\hline
        $\mathrm{\mathbf{N_{max}}}$&$\mathrm{\mathbf{SF}_7}$&$\mathrm{\mathbf{SF}_8}$&$\mathrm{\mathbf{SF}_9}$&$\mathrm{\mathbf{SF}_{10}}$&$\mathrm{\mathbf{SF}_{11}}$&$\mathrm{\mathbf{SF}_{12}}$\\ 
        \hline
        \cline{2-7}
      1&
\multicolumn{1}{!{\vrule width 1pt}c!{\vrule  width .5pt}}{4.82}&{1.51} &{1.06} & {4.7$\mathrm{e}$-1}& {2.7$\mathrm{e}$-1}& \multicolumn{1}{!{\vrule}c!{\vrule width 1pt}}{1.9$\mathrm{e}$-1}\\
     \hline
     \cline{3-7} 
      2& \multicolumn{1}{!{\vrule width 1pt}c!{\vrule width 1pt}}{7.7e-2}&1.1$\mathrm{e}$-7 &9.3$\mathrm{e}$-14& 7.8$\mathrm{e}$-25& 6.7e-46&3.7e-78\\
     \hline
     3& \multicolumn{1}{!{\vrule width 1pt}c!{\vrule width 1pt}}{2.7e-3}&8.2$\mathrm{e}$-9 &2$\mathrm{e}$-15 &8.2$\mathrm{e}$-27 & 3.1$\mathrm{e}$-49&1.3$\mathrm{e}$-84 \\
     \hline
     \cline{2-1} 
      4& 9.9$\mathrm{e}$-5&5.8$\mathrm{e}$-10 & 9.0$\mathrm{e}$-17& 4.3$\mathrm{e}$-29&1.2$\mathrm{e}$-49 &1.1$\mathrm{e}$-86 \\
     \hline
     5& 1.8$\mathrm{e}$-6&5.2$\mathrm{e}$-11 &6.5$\mathrm{e}$-18 & 1.3$\mathrm{e}$-30&1.0$\mathrm{e}$-53 & 3.7$\mathrm{e}$-93\\
     \hline
\end{tabular}
\vspace{.1cm}
}
\caption{Minimal throughput for each $\mathrm{SF}_m$ (in kbit/s)}
\label{tab:tab3} 
\vspace{-1.5cm}
\end{table}

\subsection{Performance Evaluation for $N_{max}(m)=1, \forall m \in \mathcal{M}$}
First let us discuss the case with SF allocation optimization only with maximum transmit power as in baseline schemes, namely the performances of \emph{Prop. Initial} (Algorithm~\ref{algo:Initial_matching}) and \emph{Prop. SF allocation} (Algorithms~\ref{algo:Initial_matching} and~\ref{algo:Matching_refinement}). Figure~\ref{fig:fj_min_1-1-1-1-1-1} shows the performance comparison of our proposed algorithms, with and without the power optimization step, and the baseline schemes in terms of minimal short-term average rates as a function of a varying number of end-devices. We can observe that our proposed algorithm yields significant performance gains compared to both the random SF-allocation and distance SF-allocation for all values of $N$. For instance, Figure~\ref{fig:fj_min_1-1-1-1-1-1} shows that, while baseline schemes lead to an early drop of minimal rate (almost null for $N>6$), the proposed algorithms still provide a good minimal throughput for a much higher number of end-devices. In this case, \emph{Prop. SF allocation} and  \emph{Prop. Initial} perform similarly.
\begin{figure}[H]
    \vspace{-.5cm}
    \centering
    \includegraphics[width=9.5cm]{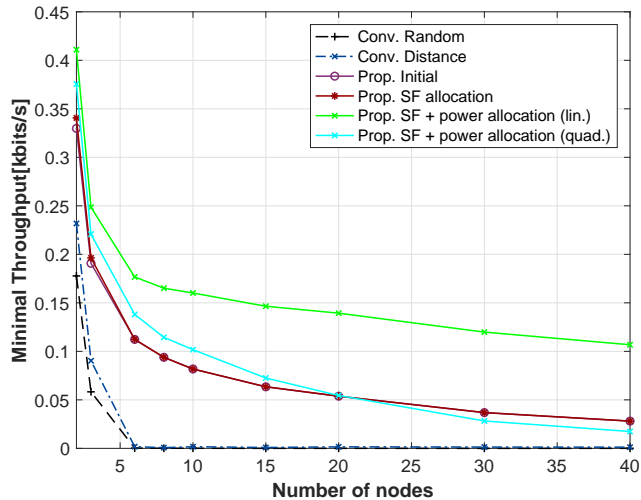}
    \vspace{-.5cm}
    \caption{Minimal short-term average rates, for proposed and baseline algorithms}
    \label{fig:fj_min_1-1-1-1-1-1}
    \vspace{-.5cm}
\end{figure}

\begin{figure}[H]
    \vspace{-.5cm}
    \centering
    \includegraphics[width=9.5cm]{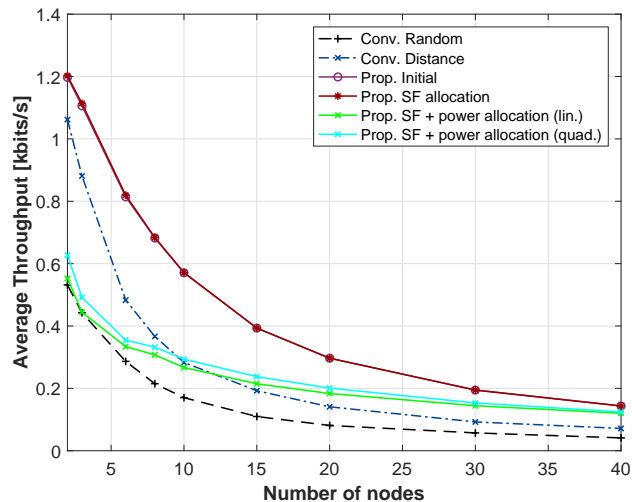}
    \vspace{-.5cm}
    \caption{Average network throughput for proposed and baseline algorithms}
    \label{fig:fj_avg_1-1-1-1-1-1}
    \vspace{-.5cm}
\end{figure}

Figure~\ref{fig:fj_avg_1-1-1-1-1-1} shows the performance comparison in terms of average network throughput over all end-devices between the different allocation schemes, against a varying number of end-devices. We can clearly see that the proposed scheme is superior to all other schemes. From Figure~\ref{fig:fj_avg_1-1-1-1-1-1}, the proposed method can provide an average throughput always larger than 180 bit/s while \emph{Conv. Random} and \emph{Conv. Distance} offer less than half for $N \geq 10$. We can also notice that \emph{Conv. Distance} performs quite good when $N \leq 10$.

We now evaluate the fairness levels of the different algorithms by using the Jain's fairness index, given by $ \mathcal{J}=\frac{\left(\sum\limits_{n\in \mathcal{N}} U_n\right)^2}{N \times \sum\limits_{n\in \mathcal{N}} U_n^2}$. 
\begin{figure}[H]
    \vspace{-.5cm}
    \centering
    \includegraphics[width=9.5cm]{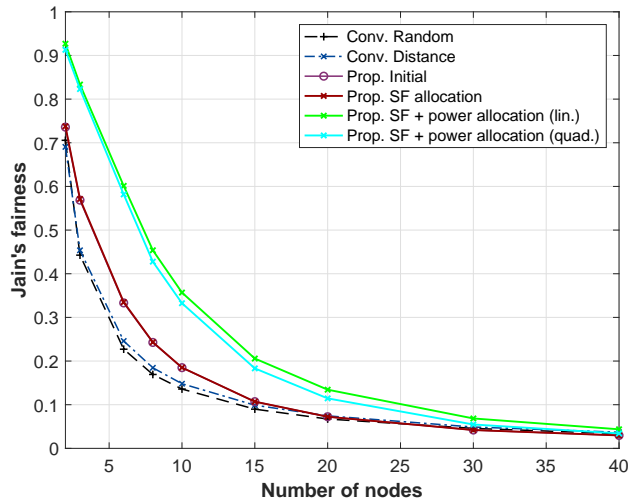}
    \vspace{-.5cm}
    \caption{Jain's fairness metric for proposed and baseline algorithms}
    \label{fig:fj_tp_Jf_1-1-1-1-1-1}
\end{figure}
\vspace{-.5cm}

Figure~\ref{fig:fj_tp_Jf_1-1-1-1-1-1} shows that by the considered max-min strategy and matching-based methodology, the proposed algorithms improve well the system fairness level compared to baseline methods.

Next, we discuss the performance of the proposed joint SF and power allocation algorithms, shown in Figs.~\ref{fig:fj_min_1-1-1-1-1-1},~\ref{fig:fj_avg_1-1-1-1-1-1}, and~\ref{fig:fj_tp_Jf_1-1-1-1-1-1}
. Firstly, we observe that the proposed power allocation in Algorithm~\ref{algo:power_1}, with linear and quadratic approximations, outperforms all other methods, including the proposed SF-allocation with fixed power, in terms of minimal short-term average rates. However, when $N$ is larger than 20, we can observe a decrease in the minimal throughput of the quadratic approximation compared to our proposed SF-allocation algorithm. This is due to the use of the quadratic approximation which does not necessarily guarantee a better local optimum compared to that offered by linear approximation, as this depends on the difference between the solution sets of the approximated problems - linear and quadratic cases -, and that of the original problem. However, both approximations yield much higher minimal throughputs compared to baseline schemes. 
Along with higher minimal throughput, Fig.~\ref{fig:fj_tp_Jf_1-1-1-1-1-1} shows the large fairness improvements brought by our joint SF and power allocation schemes, against baseline and proposed scheme with SF-allocation only.
Furthermore, with Fig.~\ref{fig:fj_avg_1-1-1-1-1-1}, we observe that with optimized power, the proposed solutions enable much larger minimal throughput and higher Jain's fairness, but at the cost of lower network throughput. Still, the proposed schemes, with both linear and quadratic approximations, outperform both baseline schemes in terms of network throughput, for larger number of end-devices.

\begin{figure}[ht]
    \vspace{-.5cm}
    \centering
    \includegraphics[width=9.5cm]{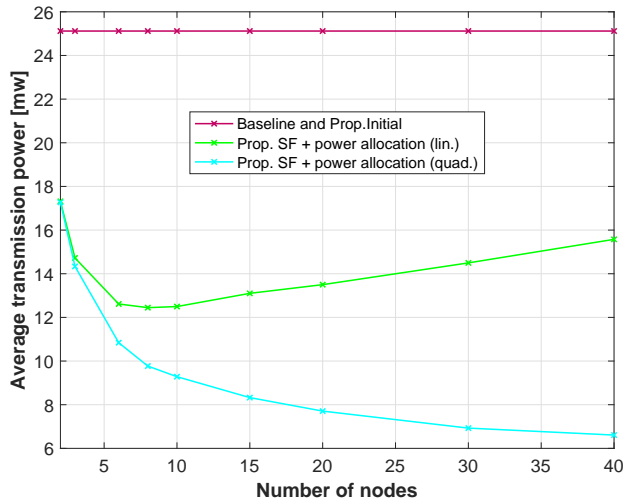}
    \vspace{-.5cm}
    \caption{Average power consumption for proposed and baseline algorithms}
    \label{fig:fj_tp_Pow_1-1-1-1-1-1}
    \vspace{-.5cm}
\end{figure}

Finally, Fig.~\ref{fig:fj_tp_Pow_1-1-1-1-1-1} depicts the average transmit power consumed by end-devices with a varying number of nodes. We can observe that the proposed joint allocation schemes enable important savings in energy consumption while providing better throughput and higher fairness compared to the fixed transmit power allocation approaches. We also notice that with a quadratic approximation, Algorithm~\ref{algo:power_1} allows even higher power savings, i.e., up to 58\% compared to the linear approximation case. That is, in the linear case, more power is spent for low channel quality users in order to maintain high minimal average rates. On the contrary, solutions obtained by quadratic approximation tend to decrease power consumption, at the expense of lower minimal throughputs.

\subsection{Performance Evaluation for $\mathrm{N_{max}}(m)= 3$ for $m=7$, $\mathrm{N_{max}}(m)=1 \text{ } \forall m \neq 7$}
For the second scenario, Fig.~\ref{fig:fj_min_3-1-1-1-1-1} depicts the performance comparison of our proposed algorithms with and without joint power allocation optimization step, and the baseline schemes. From Fig.~\ref{fig:fj_min_3-1-1-1-1-1} we can first confirm that our SF-allocation algorithm \emph{Prop. SF allocation} still outperforms baseline schemes even when increasing $\mathrm{N_{max}}(7)$. However, its performance decreases compared to the case of Fig.~\ref{fig:fj_min_1-1-1-1-1-1} where there are no co-SF interferences. We also observe that, unlike in the previous scenario, \emph{Prop. SF allocation} now provides higher minimal throughputs than \emph{Prop. Initial} for $N\leq 20$: this performance gap is more obvious than the case where $\mathrm{N_{max}}(7)=1\text{ } \forall m,$ since swap operations were almost absent in that case.
\begin{figure}[ht]
    \vspace{-.5cm}
    \centering
    \includegraphics[width=9.5cm]{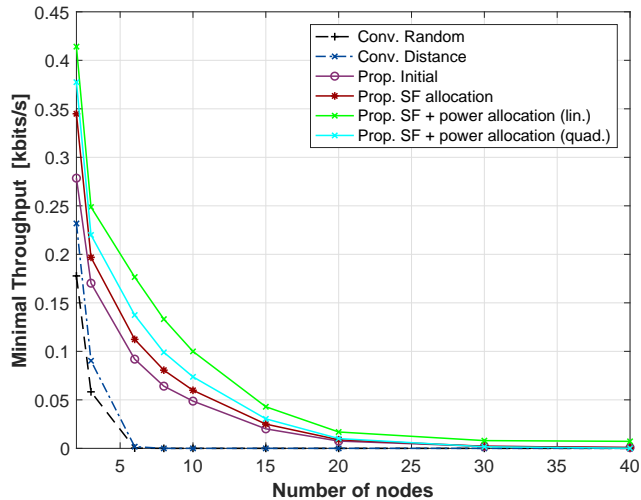}
    \vspace{-.5cm}
    \caption{Minimal short-term average rates for proposed and baseline algorithms}
    \label{fig:fj_min_3-1-1-1-1-1}
    \vspace{-.5cm}
\end{figure}

\begin{figure}[H]
    \vspace{-.5cm}
    \centering
    \includegraphics[width=9.5cm]{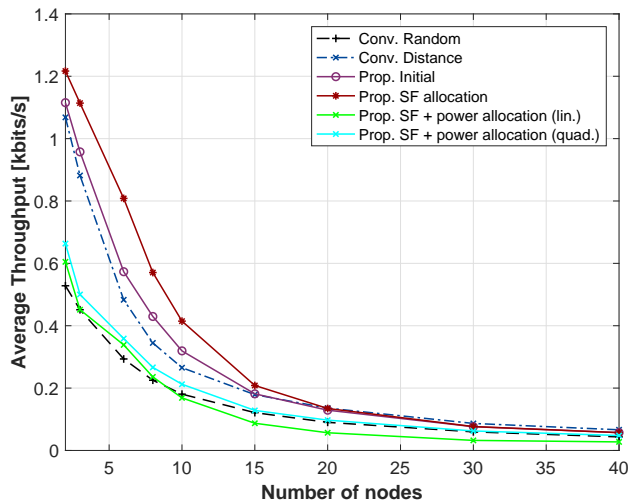}
    \vspace{-.5cm}
    \caption{Average network throughput for proposed and baseline algorithms}
    \label{fig:fj_avg_3-1-1-1-1-1}
    \vspace{-.5cm}
\end{figure}

Fig.~\ref{fig:fj_avg_3-1-1-1-1-1} shows the impact of maximizing the minimal short-term average rates on the average network throughput over all end-devices. It can be clearly seen that the highest average network throughput is achieved by our \emph{Prop. SF allocation} for $N \leq 20$ and that it provides a significant improvement compared to Initial Matching, i.e., \emph{Prop. Initial}. However, with joint power allocation, the proposed schemes have a poorer performance as maintaining a high minimal throughput is very challenging whenever there are both inter-SF and co-SF interferences. Note that the proposed solution with quadratic approximation offers a slightly higher average throughput compared to the linear case.

\begin{figure}[H]
    \vspace{-.5cm}
    \centering
    \includegraphics[width=9.5cm]{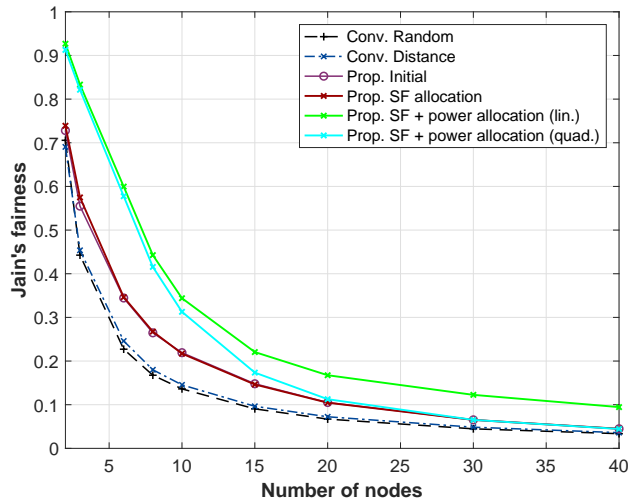}
    \vspace{-0.5cm}
    \caption{Jain's fairness metric,  proposed and baseline algorithms}
    \label{fig:fj_tp_jf_3-1-1-1-1-1}
    \vspace{-.5cm}
\end{figure}

From Figure~\ref{fig:fj_tp_jf_3-1-1-1-1-1}, we clearly see that the proposed approaches bring significant performance gains in terms of fairness, which is in line with the gains achieved in terms of minimal throughputs. In addition, the proposed power optimization still enables remarkable fairness improvements, even under both inter-SF and co-SF interferences, with a larger gain for the linear approximation.

\begin{figure}[H]
    \vspace{-.5cm}
    \centering
    \includegraphics[width=9.5cm]{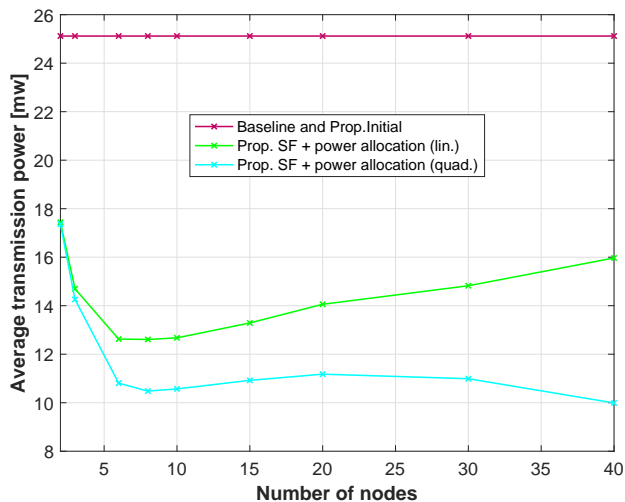}
    \vspace{-.5cm}
    \caption{Average power consumption for proposed, proposed with power control and baseline algorithms}
    \label{fig:fj_tp_pow_3-1-1-1-1-1}
    \vspace{-.5cm}
\end{figure}
Finally, Fig.~\ref{fig:fj_tp_pow_3-1-1-1-1-1} shows the drastic energy savings offered by our proposed power optimization strategies (up to 37.43\%), compared to baseline and proposed SF-allocation only. Similarly to scenario 1, the quadratic approximation offers further power savings compared to the linear approximation case.

Overall, the proposed joint SF assignment and power allocation method provides remarkable performance improvements, jointly in terms of minimal achievable rates, network throughput, fairness, and consumed energy with a limited computational complexity suitable for LoRa gateways.

\section{Conclusion}
\label{sec:conclude}
In this work, we have addressed the issue of network performance enhancement for a LPWAN based on LoRa physical layer, where both impacts of co-SF and inter-SF interferences were included. Focusing on user fairness improvement for uplink communications, the objective was to optimize SFs' assignment and transmit power allocation for maximizing the minimal short-term average user rates, whose expressions are in line with the LoRaWAN specifications by not assuming instantaneous CSIs. The intractability of the joint SF and power allocation problem is tackled by separating it into two subproblems: SF assignment under fixed power, and power allocation under fixed SFs. Simulation results show that, despite severe co-SF and inter-SF interferences, our proposed algorithms outperformed baseline algorithms, jointly in terms of minimal rates, user fairness, and network throughput. Both proposed linear and quadratic approximation approaches to the non-linear feasibility problem for power allocation were shown to provide efficient transmit power solutions, leading to drastic energy savings, while further enhancing minimal throughput and user fairness.




\section*{Acknowledgments}
\label{sec:ack}
This work was supported by the NII Collaborative Research Grant, the NII Grant for the MoU with LIMOS University Clermont Auvergne, and by the Grant-in-Aid for Scientific Research (Kakenhi) no. 17K06453 from the Ministry of Education, Science, Sports, and Culture of Japan.

\bibliographystyle{IEEEtran}
\bibliography{ref2}



\end{document}